\def\mode{0}
\title{Binary expression of ancestors\\ in the Collatz graph\thanks{Research supported by European Research Council (ERC) under the European Union’s Horizon
2020 research and innovation programme (grant agreement No 772766, Active-DNA project), and Science Foundation
Ireland (SFI) under Grant number 18/ERCS/5746.}} 
\author{Tristan Stérin
}
\authorrunning{T. Stérin}
\institute{Hamilton Institute and 
Department of Computer Science \\
 Maynooth University \\
\email{tristan.sterin@mu.ie}\quad 
\url{https://dna.hamilton.ie/tsterin/}}
\newcommand{\norm}[1]{||#1||}
\newcommand{\mud}[3]{#1 \equiv #2 \text{ mod } #3}
\newcommand{\N}{\mathbb{N}}
\newcommand{\Z}{\mathbb{Z}}
\newcommand{\ZnZ}[1]{\Z/#1\Z}
\newcommand{\Q}{\mathbb{Q}}
\newcommand{\Pa}{\mathcal{P}}
\newcommand{\F}{\mathcal{F}}
\newcommand{\E}{\mathcal{E}}
\newcommand{\B}{\mathcal{B}}
\newcommand{\Bs}{\B^{*}}
\newcommand{\Predk}[1]{\text{Pred}_{k}(#1)}
\newcommand{\Pred}[2]{\text{Pred}_{#1}(#2)}
\newcommand{\regk}[1]{\texttt{reg}_k(#1)}
\newcommand{\reg}[2]{\texttt{reg}_{#1}(#2)}
\newcommand{\join}{\texttt{join}}
\newcommand{\I}{\mathcal{I}}
\newcommand{\R}[2]{\mathcal{R}_{#1}(#2)}
\newcommand{\Nt}{\mathcal{N}}
\newcommand{\theTree}{$(\alpha_{0,-1})$-tree}
\newcommand{\composeP}[2]{#1\,\cdot #2}
  \protected@write\@auxout{}{%
    \string\@restatetheorem{#1}{\detokenize\expandafter{\BODY}}%
  }%
\BODY\end{theorem}%
\newcommand{\@restatetheorem}[2]{%
  \expandafter\gdef\csname restatethis@#1\endcsname{#2}%
}
\newcommand{\restate}[1]{%
  \begingroup
  \renewcommand{\thetheorem}{\ref{#1}}%
  \begin{theorem}[$\E\Predk x$ is regular]\csname restatethis@#1\endcsname\end{theorem}%
  \endgroup
}
\newcommand{\figStruct}{3\xspace}
\newcommand{\figGene}{4\xspace}
\newcommand{\lemTTT}{44\xspace}
\newcommand{\appT}{A\xspace}
\newcommand{\appFea}{B\xspace}
\newcommand{\appCode}{C\xspace}
\newcommand{\appLong}{D\xspace}
\begin{document}

\maketitle

\fi

\begin{abstract}The Collatz graph is a directed graph with natural number nodes and where there is an edge from node
$x$ to node $T(x)=T_0(x)=x/2$ if $x$ is even, or to node $T(x)=T_1(x)=\frac{3x+1}{2}$ if $x$ is odd.
Studying the Collatz graph in binary reveals complex message passing behaviors
based on carry propagation which seem to capture the essential dynamics and complexity of the
Collatz process.
We study the set $\E \text{Pred}_k(x)$ that contains the binary expression of any ancestor $y$ that reaches $x$ with a limited budget of $k$ applications of $T_1$. The set $\E \text{Pred}_k(x)$ is known to be regular, Shallit and Wilson [EATCS 1992]\nocite{Shallit1992The}. In this paper, we find that the structure of the Collatz graph naturally leads to the construction of a regular expression, $\reg{k}{x}$, which defines $\E \text{Pred}_k(x)$. Our construction, is exponential in $k$ which improves upon the doubly exponentially construction of Shallit and Wilson. Furthermore, our result generalises Colussi's work on the $x = 1$ case [TCS 2011] to any natural number $x$, and gives mathematical and algorithmic\footnote{Code available here: \url{https://github.com/tcosmo/coreli}}  tools for further exploration of the Collatz graph in binary.
\end{abstract}



\section{Introduction}\label{sec:intro}

\begin{figure}[t!]
\center
\begin{subfigure}[t]{0.4\textwidth}
\resizebox{0.7\hsize}{!}{
$\begin{array}[t]{r}
    \overbrace{101110011}^x \\
+ \ 101110011\boldsymbol{1} \\ \hline
    \underbrace{1000101101}_{T_1(x)}0
\end{array}$}
\caption{\small The sum $3x+ \boldsymbol{1}$ in binary. 
The number $x$ gets added to $2x+ {\bf 1}$ which in binary is the
left shift of $x$ to which ${\bf 1}$ is added.}\label{fig:col_bin1}
\end{subfigure}
\hfill
\begin{subfigure}[t]{0.54\textwidth}
\resizebox{0.7\hsize}{!}{
$\begin{array}[t]{r}
                0\bar{0}\bar{1}\bar{0}\bar{1}\bar{1}10\bar{0}\bar{1}\bar{1}\bar{0} \\ \hline
    10001011010\phantom{0} 
\end{array}$} \
\caption{\small The sum $3x+\boldsymbol{1}$ interpreted as: ``each bit of $x$ sums with its right neighbour and the neighbour's potential carry''. Carries are represented by dots. The $+\boldsymbol{1}$ part of the operation is embedded in a carry on a fictional $0$ to the right of the rightmost $1$ bit. The first step, at the rightmost end, reads: $1 + \bar{0}$ which produces an ouput of $0$ and transports the carry from to $\bar{1}$.}\label{fig:col_bin2}
\end{subfigure}
\caption{\small Two ways to interpret the operation $3x + 1$ in binary, illustrated on the number $x$ with binary representation $101110011$. The method shown in (b) highlights carry propagation.}\label{fig:col_bin}
\end{figure}

Let $\N=\{0,1,\dots\}$. The Collatz map, $T:\N\to\N$, is defined by $T(x) = T_{0}(x) = x/2$ if $x$ is even or $T(x) = T_{1}(x) = (3x+1)/2$ if $x$ is odd. The Collatz graph is the directed graph generated by $T$, nodes are all $x\in\N$ and arcs are $(x,T(x))$. This map, and its graph, have been widely studied (see surveys \cite{survey1} and \cite{survey2}) and research has been driven by a problem, open at least since the 60s: \textbf{the Collatz conjecture}. The conjecture states that, in the Collatz graph, any strictly positive natural number is a predecessor of $1$. In other words, any $x>0$ reaches $1$ after a finite number of $T$-iterations. As of 2020, the Collatz conjecture has been tested for all natural numbers below $2^{68}$ without any counterexample found~\cite{Barina2020}.

There has been a fruitful trend of studying the Collatz process in binary \cite{properties,nlin0502061,Shallit1992The,DBLP:journals/tcs/Colussi11,DBLP:journals/tcs/Hew16,complexity13,capco:hal-02062503}. That is because, the maps $T_0$ and $T_1$ have natural binary interpretations. The action of $T_0$ corresponds to shifting the input's binary representation to the right -- deleting a trailing 0. While writing $T_1(x) = (3x + 1)/2 = (x + (2x + 1))/2$ reveals an interesting mechanism. In binary, the operation $x+(2x+1)$ corresponds to adding $x$ to its left-shifted version where the least significant bit has been set to $1$, Figure~\ref{fig:col_bin1}. Equivalently $x+(2x+1)$ corresponds to each bit of $x$ being added to its right neighbour and the  potential carry being placed on that neighbour. The $+1$ part of the operation can be represented as a carry appearing \textit{ex nihilo} after the rightmost  $1$, Figure~\ref{fig:col_bin2}. The described mechanism results in the propagation of a carry within the binary representation of $x$: two consecutive $1$s create a carry while two consecutive $0$s absorb an incoming carry. Representing trajectories in the Collatz graph with the carry-annotated base 2 representation of Figure~\ref{fig:col_bin2} leads to complex, ``Quasi Cellular Automaton'' evolution diagrams which seem to encompass the overall complexity of the Collatz process \cite{cloney19873x+}. These cary-annoted evolution diagrams are studied in depth in \cite{Collatz2RP}.

Here, we ask the following question: for a given bit string $\omega$, what is the shape of the bit strings which ``degrade'' into $\omega$ under the action of the Collatz process? Said otherwise, for an arbitrary $x$, can we characterize the binary expansion of all $y$ which reach $x$ in the Collatz process? To answer that question, we find that it is natural to put a budget on the number of times the map $T_1$ is used (see Remark~\ref{rk:justi}) and we study $\E \text{Pred}_k(x)$ the set of binary expressions of all $y$ which reach $x$ by using the map $T_1$ exactly $k$ times and the map $T_0$ an arbitrary number of times. There is a high-level argument which shows that for each $x$ and $k$ the set $\E \text{Pred}_k(x)$ is regular \cite{Shallit1992The}: the binary interpretation of the $3x+1$ operation as shown in Figure~\ref{fig:col_bin2} can be performed by a 4-state, reversible, Finite State Transducer (which states correspond to symbols $0,\bar{0},1,\bar{1}$). In \cite{Shallit1992The}, the authors make the point that having a budget of $k$ on the map $T_1$ corresponds to iterating that transducer $k$ times. Since finite iterations of FSTs lead to regular languages, $\E \text{Pred}_k(x)$ is regular. However, while it gives regular structure to the set $\E \text{Pred}_k(x)$, from the point of view of regular expressions, their argument does not lead to a tractable representation of $\E \text{Pred}_k(x)$: they construct exponentially large FSTs in $k$, leading to doubly exponentially large regular expressions when using general purpose regular expression generation algorithm \cite{surveyRegexAuto}.

In this paper, we find that the knowledge about the binary structure of ancestors of $x$ is embedded in the \textit{geometry} of finite paths that reach $x$ in the Collatz graph. By geometry of a path, we mean the \textit{parity vector} \cite{10.2307/2322189,wirsching1998the,arithProg,Terras1976} associated to that path, which corresponds to looking at the path's elements modulo $2$ (Figure~\ref{fig:example_paths}). We find that there is a tight link between the shape of the parity vector and the binary expression of the first element on the path, which is an ancestor of $x$. Hence, we focus on characterizing the shapes of parity vectors of paths ending in $x$ and then translate those shapes into binary expressions of ancestors. The budget of $k$ applications of the map $T_1$ will translate to the constraint of having $k$ $1$s in the parity vectors we consider. Our main result exploits the mapping between constrained parity vectors and binary representation of ancestors at ``$T_1$-distance'' $k$, in order to construct a regular expression $\regk{x}$ which defines $\E \text{Pred}_k(x)$. As the number of possible shapes of constrained parity vectors grows exponentially with $k$, these regular expressions are big\footnote{\if 0\mode Appendix~\ref{app:long} \else Appendix~\appLong in \cite{Collatz1arxiv} \fi shows $\reg{4}{1}$ which gives an idea of how large the regular expressions get.} but only exponential in $k$ (against doubly exponential in $k$ in previous constructions \cite{Shallit1992The}):

\begin{restatable}{theorem}{thmain}
\label{th:main}

For all $x\in\N$, for all $k\in\N$ there exists a regular expression $\text{\emph{\texttt{reg}}}_k(x)$ that defines $\E\Predk x$. The regular expression $\text{\emph{\texttt{reg}}}_k(x)$ is structured as a tree with $2^{k}3^{k(k-1)/2}$ branches, alphabetic width $O(2^{k}3^{k(k+1)/2})$ and star height equal to $1$.

\end{restatable}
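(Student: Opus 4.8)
The plan is to construct $\reg{k}{x}$ by induction on $k$, peeling off the last application of $T_1$ along paths that reach $x$. The base case $k=0$ is immediate: $\Predk{x}$ with $k=0$ is exactly $\{2^{n}x : n\in\N\}$, so $\E\Pred{0}{x}=\baseTwo{x}\,0^{*}$, a tree with one branch, alphabetic width $O(|\baseTwo{x}|)$, and star height $1$ (the lone degenerate case $x=0$ gives the finite language $\{\texttt{0}\}$). For $k\geq 1$, any path realising $y\leadsto x$ with exactly $k$ uses of $T_1$ factors uniquely as a prefix $y\leadsto z$ using $T_1$ exactly $k-1$ times, then one step $T_1$, then a maximal block of $T_0$'s; hence $T_1(z)=2^{m}x$ for some $m\in\N$, i.e.\ $z=z_m:=(2^{m+1}x-1)/3$ whenever this is an odd natural number --- which, since $2^{m+1}x\equiv 1\pmod{3}$ forces a parity condition on $m$ depending on $x\bmod 3$, holds for $m$ in one fixed residue class mod $2$ (and for no $m$ when $3\mid x$). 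This gives the set identity
\[
\E\Predk{x}\;=\;\bigcup_{m\text{ admissible}}\E\Pred{k-1}{z_m}.
\]

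The engine of the proof --- and the step I expect to be the main obstacle --- is a structural lemma making precise the ``tight link between the shape of a parity vector and the binary expression of the corresponding ancestor'' announced in the introduction, and showing that this link keeps the whole description inside a very restricted class. Through the classical parity-vector bijection $\ZnZ{2^{\ell}}\to\{0,1\}^{\ell}$ (Terras~\cite{Terras1976}), a path of length $\ell$ with $k$ ones in its parity vector $v$ pins its first node down to $y_0=(2^{\ell}x-c_v)/3^{k}$, a natural number exactly when a congruence modulo $3^{k}$ relating $\ell$ to the gap-pattern of $v$ holds; the lemma I need states that $\baseTwo{y_0}$ is read off from $v$ by a transformation that is local up to bounded carry propagation (this is where the carry picture of Figure~\ref{fig:col_bin2} enters), so that the set of admissible parity vectors with $k$ ones --- a finite union of cosets of sublattices of $\N^{k+1}$, one per way of resolving the gap lengths against the $3$-adic condition --- maps to a finite union of concatenations of fixed words and stars of fixed words. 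Concretely, already at $k=1$ one computes $z_m=(2^{m+1}x-1)/3$ in binary as a pattern $\baseTwo{t}\,1\,(01)^{*}$ up to low-order boundary effects depending on the least significant bits of $x$, exhibiting both the star-height-$1$ shape and the way runs of $T_0$'s turn into the starred sub-words. The difficulty is that division by $3$ is not a local operation on binary strings, so the argument must show that, tracked along the geometry of a path, the carries stay bounded and the a priori infinite union over $m$ (and over iterated preimages) collapses into finitely many clean branches of controlled literal length.

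To make the induction close I would recurse not on a single node but on a whole regular family of nodes: show that whenever $\mathcal{R}$ is a tree-shaped regular expression of the restricted form above describing a set $S\subseteq\N$, there is one more such expression describing $\bigcup_{x\in S}\E\Predk{x}$, built by applying the one-step transformation $\mathcal{R}\mapsto\mathcal{R}_{1}$ (sending $S$ to $\{z_m : x\in S\}$) and then invoking the budget-$(k-1)$ construction on the result. Unrolling, the $j$-th peeling (counting from the outside, $1\le j\le k$) multiplies the number of branches by $2\cdot 3^{j-1}$: the factor $2$ records the parity constraint on the shift $m$ --- both parities must be kept, since members of $S$ may have different residues mod $3$, and either branch may turn out to define $\emptyset$ (as happens, e.g., when $3\mid x$) --- and the factor $3^{j-1}$ records that the value of $z_m$ modulo $3$ is governed by the counts modulo $3$ of the up to $j-1$ starred segments accumulated so far, forcing a split in which each $(w)^{*}$ becomes $w^{0}(w^{3})^{*}\cup w^{1}(w^{3})^{*}\cup w^{2}(w^{3})^{*}$ and thereby at worst triples a period length.

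Finally I would collect the bookkeeping. The number of branches is $\prod_{j=1}^{k}\bigl(2\cdot 3^{j-1}\bigr)=2^{k}3^{k(k-1)/2}$. Each of the $k$ single-step transformations at worst triples the period lengths of the starred sub-words, so a single branch --- itself a concatenation of fixed words and stars of fixed words --- has alphabetic width $O(3^{k})$ (the $x$-dependent prefixes add a further additive $O(|\baseTwo{x}|)$ per branch, lower order for fixed $x$), giving total alphabetic width $O\bigl(2^{k}3^{k(k-1)/2}\cdot 3^{k}\bigr)=O\bigl(2^{k}3^{k(k+1)/2}\bigr)$; and every star the construction produces is applied to a fixed word and is never nested, so the star height is $1$. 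Correctness --- that $\reg{k}{x}$ defines exactly $\E\Predk{x}$ --- then follows by induction on $k$ from the displayed set identity together with the structural lemma.
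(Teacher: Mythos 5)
Your overall architecture matches the paper's---you peel off the last odd step (in the paper's notation, decompose the parity vector of a path reaching $x$ as $p=p_2\,\cdot\leftarrow\cdot\,(\downarrow)^m$), induct on the $T_1$-budget, and you reproduce the announced counts---but the entire mathematical content of the theorem is concentrated in the ``structural lemma'' that you yourself flag as the main obstacle and never prove. Concretely: why the a priori infinite union over the shift $m$, iterated $k$ times, collapses into finitely many star-height-$1$ branches; why the starred words have the exact periods $2\cdot 3^{j-1}$; why carries under repeated division by $3$ interleaved with shifts stay bounded; and why one preimage step refines each existing star as $w^{a}(w^{3})^{*}$, $a\in\{0,1,2\}$, rather than interacting with the word $w$ in a period-dependent way---all of this is asserted, and checking $k=1$ does not discharge it. In the paper this collapse is exactly what Theorem~\ref{th:struct_occ}, Theorems~\ref{th:begin} and~\ref{th:end}, and Lemma~\ref{lem:down} supply: the terminal value of a first occurrence lies in $(\ZnZ{3^{k}})^{*}$ (Lemma~\ref{lem:star}), $2^{-1}_{k}$ is a primitive root there, and appending $\pi_k=2\cdot3^{k-1}$ arrows $\downarrow$ fixes the terminal residue while prepending the fixed word $\R{i_0}{\Pi_k}$ to the encoding; this is what yields exactly one star per level and a branching indexed by the intermediate residue, i.e.\ Equation~(\ref{eq:mother}). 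Your congruence-tracking mechanism is plausible in spirit (it is close to iterating the Shallit--Wilson transducer while insisting the expressions stay in a restricted star-height-$1$ class), but as written the regularity of your expressions, the width bound and the branch count all rest on an unproved claim, so the proof does not go through.

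There is also a concrete error at the outset: the identity $\E\Predk{x}=\bigcup_{m}\E\Pred{k-1}{z_m}$ (with your $z_m=(2^{m+1}x-1)/3$) is false as a statement about strings under Definition~\ref{def:epred}, because $\E\Predk{x}$ fixes the number of leading \texttt{0}s of each ancestor to the length of its full path down to $x$ (plus the $\I^{-1}(\lfloor x/3^k\rfloor)$ prefix), whereas $\E\Pred{k-1}{z_m}$ pads only to the shorter path ending at $z_m$. For instance $\texttt{01}\in\E\Pred{1}{1}$, yet each $\E\Pred{0}{z_m}$ (namely the languages of $\texttt{1}(\texttt{0})^{*}$, $\texttt{101}(\texttt{0})^{*}$, $\dots$) contains no word with a leading \texttt{0}; conversely $\texttt{1}\in\E\Pred{0}{1}$ but $\texttt{1}\notin\E\Pred{1}{1}$. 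The identity holds only after applying $\I$, i.e.\ for $\Predk{x}$ itself, so your closing induction does not prove that your expression defines $\E\Predk{x}$ unless the padding is tracked explicitly---which is precisely the bookkeeping the paper's encoding $\E$ of parity vectors (Theorem~\ref{th:begin}) is designed to perform.
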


Our result generalises \cite{DBLP:journals/tcs/Colussi11} which focused on the case $x=1$. We claim that the framework we introduce is more general than \cite{DBLP:journals/tcs/Colussi11,DBLP:journals/tcs/Hew16} and that, in potential future work, it could easily be applied to generalisations of the Collatz map such as the $T_q$ maps\footnote{Defined, for $q$ odd, by $T_q(x) = x/2$ if $x$ is even or $T_q(x) = qx + 1$ if $x$ is odd. These maps are as mysterious as the Collatz map.} \cite{10.2307/2153499}. Our result improves \cite{Shallit1992The} by an exponential factor which makes our construction more fit for pratical use. We have implemented the construction of Theorem~\ref{th:main} \if 0\mode (see Appendix~\ref{app:code}) \else (see Appendix~\appCode in the expanded version~\cite{Collatz1arxiv}) \fi and claim that it gives a new exploratory tool for studying the Collatz process in binary. Indeed, for any $x$ and $k$, we can sample $\regk{x}$ in order to analyse the different mechanisms by which the Collatz process transforms an input string into the binary representation of $x$ in $k$ odd steps. Also, from our result, one can also easily sample from $\regk{x}$ the \emph{smallest} ancestor of $x$ at $T_1$-distance $k$ which suggests a new approach for future work in trying to understand how the Collatz process \textit{optimally encodes} the structure of $x$ in ancestors at $T_1$-distance $k$.

In future work, we plan to use our algorithm as a tool to further understand the dynamics of the Collatz process in binary. In particular, we are very much concerned by the question: ``Can the Collatz process compute?''. Indeed, direct generalisations of the Collatz process are known to have full Turing power \cite{Conway,Koiran1999,DBLP:conf/tamc/KurtzS07}. While the Collatz conjecture, by characterizing the long term behavior of any trajectory, seems to imply that there are some limitations on the computational power of the Collatz process, nothing is known. We believe that further studying carry propagation diagrams in the binary Collatz process can lead to answers on the computational power of the Collatz process and that, the tools built in this article can support that research.

\nocite{DBLP:journals/tcs/Hew16,DBLP:journals/tcs/Colussi11}
\nocite{Terras1976,10.2307/2322189,wirsching1998the}
\nocite{10.2307/2153499}
\nocite{Conway,Koiran1999,DBLP:conf/tamc/KurtzS07}


\section{Parity vectors and occurrences of parity vectors}\label{sec:un}
Let $\N=\{0,1,\dots\}$. We recall that the Collatz map $T:\mathbb{N}\to\mathbb{N}$, is defined by $T(x)=T_0(x)=x/2$ if $x$ is even or $T(x) = T_1(x) = (3x + 1)/2$ if $x$ is odd.
The concept of parity vector was introduced in \cite{Terras1976} (under the name \textit{encoding vector}) and used, for instance, in \cite{10.2307/2322189,wirsching1998the,arithProg}. While we work with the same concept, we introduce a slightly different representation\footnote{This is done both because, in this format, parity vectors can be represented nicely in the plane (see Figure~\ref{fig:example_paths}), and because binary strings will be omnipresent in Section~\ref{sec:first} and we don't want to confuse the reader with too many of them. } of parity vectors by using arrows $\downarrow$ and $\leftarrow$ instead of bits 0 and 1. In this Section, we introduce notation to manipulate \textit{occurrences} of parity vectors in the Collatz graph and reformulate a crucial result of \cite{wirsching1998the} in our framework (Theorem~\ref{th:struct_occ}).

\begin{figure}[h!]



  \begin{subfigure}[t]{1.\linewidth}
  \centering
        \if 1\mode
        \begin{tikzpicture}[scale=0.4]
        \else
        \begin{tikzpicture}[scale=0.5]
        \fi

            \draw [thick] [->] (-0.1,0) -- (-1,0);
            \filldraw [red,fill=red] (0,0) circle [radius=0.13];
            \draw [fill] (-1,0) circle [radius=0.1];

            \draw [fill] (-1,0) circle [radius=0.1];
            \draw [thick] [->] (-1,0) -- (-2,0);
            \draw [fill] (-2,0) circle [radius=0.1];

            \draw [fill] (-2,0) circle [radius=0.1];
            \draw [thick] [->] (-2,0) -- (-2,-1);
            \draw [fill] (-2,-1) circle [radius=0.1];

            \begin{scope}[shift={(10,0)}]

                \draw [thick] [->] (-0.1,0) -- (-1, 0);
                \filldraw [red,fill=red] (0,0) circle [radius=0.13];
                \draw [fill] (-1, 0) circle [radius=0.1];

                \draw [thick] [->] (-1, 0) -- (-1, -1);
                \draw [fill] (-1, -1) circle [radius=0.1];

                \draw [thick] [->] (-1, -1) -- (-2, -1);
                \draw [fill] (-2, -1) circle [radius=0.1];

                \draw [thick] [->] (-2, -1) -- (-3, -1);
                \draw [fill] (-3, -1) circle [radius=0.1];

                \draw [thick] [->] (-3, -1) -- (-4, -1);
                \draw [fill] (-4, -1) circle [radius=0.1];

                \draw [thick] [->] (-4, -1) -- (-4, -2);
                \draw [fill] (-4, -2) circle [radius=0.1];

            \end{scope}
                  \end{tikzpicture}
    \caption{Parity vectors $p_1=\;\leftarrow \leftarrow \downarrow$ and $p_2 =\; \leftarrow \downarrow \leftarrow \leftarrow \leftarrow \downarrow$.}\label{fig:example_paths:a}
  \end{subfigure}
 \par\bigskip 
   \begin{subfigure}[t]{1\linewidth}
\centering
            \if 1\mode
        \begin{tikzpicture}[scale=0.4]
        \else
        \begin{tikzpicture}[scale=0.5]
        \fi

    \begin{scope}[shift={(0,0)}]

        \draw [thick] [->] (-0.1,0)-- (-1, 0);
                \filldraw [red,fill=red] (0,0) circle [radius=0.13];
        \draw [fill] (-1, 0) circle [radius=0.1];

        \draw [thick] [->] (-1, 0) -- (-2, 0);
        \draw [fill] (-2, 0) circle [radius=0.1];

        \draw [thick] [->] (-2, 0) -- (-2, -1);
        \draw [fill] (-2, -1) circle [radius=0.1];

        \node [above] at (0, 0) {${\scriptstyle 3 }$};
        \node [above] at (-1, 0) {${\scriptstyle 5 }$};
        \node [above] at (-2, 0) {${\scriptstyle 8 }$};
        \node [left] at (-2, -1) {${\scriptstyle 4 }$};

            \begin{scope}[shift={(10,0)}]

        \draw [thick] [->] (-0.1,0) -- (-1, 0);
                \filldraw [red,fill=red] (0,0) circle [radius=0.13];
        \draw [fill] (-1, 0) circle [radius=0.1];

        \draw [thick] [->] (-1, 0) -- (-1, -1);
        \draw [fill] (-1, -1) circle [radius=0.1];

        \draw [thick] [->] (-1, -1) -- (-2, -1);
        \draw [fill] (-2, -1) circle [radius=0.1];

        \draw [thick] [->] (-2, -1) -- (-3, -1);
        \draw [fill] (-3, -1) circle [radius=0.1];

        \draw [thick] [->] (-3, -1) -- (-4, -1);
        \draw [fill] (-4, -1) circle [radius=0.1];

        \draw [thick] [->] (-4, -1) -- (-4, -2);
        \draw [fill] (-4, -2) circle [radius=0.1];

        \node [above] at (0, 0) { \scalebox{0.5}{%
${ 137 }$}};
        \node [above] at (-1, 0) { \scalebox{0.5}{%
${ 206 }$}};
        \node [right] at (-1, -1) { \scalebox{0.5}{%
${ 103 }$}};
        \node [above] at (-2, -1) { \scalebox{0.5}{%
${ 155 }$}};
        \node [above] at (-3, -1) { \scalebox{0.5}{%
${ 233 }$}};
        \node [above] at (-4, -1) { \scalebox{0.5}{%
${ 350 }$}};
        \node [left] at (-4, -2) { \scalebox{0.5}{%
${ 175 }$}};
          \end{scope}
          \end{scope}
    \end{tikzpicture}

    \caption{\small An occurrence of $p_1=\;\leftarrow \leftarrow \downarrow$ and an occurrence of $p_2=\;\leftarrow \downarrow \leftarrow \leftarrow \leftarrow \downarrow$.}
    \label{fig:example_paths:b}
  \end{subfigure}
  \caption{\small Two parity vectors and one of their occurrences. In order to represent parity vectors, we use arrows $\downarrow$ and $\leftarrow$ instead of bits $0$ and $1$. When drawn in the plane, parity vectors read from \textbf{right} to \textbf{left}, start at the red dot.}\label{fig:example_paths}
\end{figure}

\begin{definition}[Parity Vector]\label{def:parvec} A parity vector $p$ is a word in $\{ \downarrow, \leftarrow \}^\ast$, i.e. a finite word, possibly empty, over the alphabet $\{ \downarrow, \leftarrow \}$. We call $\Pa$ the set of all parity vectors. The empty~parity~vector~is~$\epsilon$. We define~$\cdot$~to be the concatenation operation on parity vectors: $p = p_1 \cdot p_2$ is the parity vector consisting of the arrows of $p_1$ followed by the arrows of $p_2$. We use exponentiation in its usual meaning: $p^n = p \cdot p \ldots \cdot p$, $n$ times. 
\end{definition}
\begin{definition}[Norm and span]
As in \cite{wirsching1998the}, we define two useful metrics on parity vectors: (a) the \textit{norm} of $p$, written $\norm{p}$, is the total number of arrows in $p$ and (b) the \textit{span}\footnote{Called \textit{length} in \cite{wirsching1998the}. We change terminology to avoid confusion with the notion of length of a word over an alphabet. However, we keep the same mathematical notation $l(p)$.} of $p$, written $l(p)$, is the number of arrows of type $\leftarrow$ in $p$.
\end{definition}
\begin{definition}[Occurrence of a parity vector]\label{def:occ}
Let $p=a_0 \cdot \ldots \cdot a_{n-1}\in\Pa$ be a parity vector with $a_i \in \{\downarrow,\leftarrow\}$ and $n = \norm{p}$. An occurrence of $p$ in the Collatz graph, or, for short, an occurrence of $p$, is a $(n+1)$-tuple, $(o_0,\dots,o_{\norm{p}})\in\N^{\norm{p}+1}$ such that, for $0 \leq i < \norm{p}$, $o_{i+1} = T_{0}(o_i)$ if $a_i =\; \downarrow$ or $o_{i+1} = T_{1}(o_i)$ if $a_i = \; \leftarrow$. 
\end{definition}
\begin{definition}[Set of occurrences of a parity vector: $\alpha(p)$]\label{def:setocc}
Let $p\in\Pa$. We call $\alpha(p)$ the set of all the occurrences of the parity vector $p$. We order this set by the first number of each occurrence. Then, $\alpha_i(p)\in\N^{\norm{p}+1}$ denotes the $i^\text{th}$ occurrence of $p$ within that order and $\alpha_{i,j}(p)$, with $0 \leq j \leq \norm{p}$, denotes the $j^{\text{th}}$ term of the $i^{\text{th}}$ occurrence. In order to facilitate reading, we will write $\alpha_{i,-1}(p)$ instead of $\alpha_{i,\norm{p}}(p)$ to refer to the last element of the occurrence $\alpha_{i}(p)$. If the context clearly states the parity vector $p$ we will abuse notation and write $\alpha_{i,j}$ instead of $\alpha_{i,j}(p)$. 
\end{definition}
\begin{example}
Figure~\ref{fig:example_paths:a} shows two parity vectors in $\Pa$: $p_1=\;\leftarrow \leftarrow \downarrow$ and $p_2 =\; \leftarrow \downarrow \leftarrow \leftarrow \leftarrow \downarrow$. We have: $\norm{p_1} = 3$, $l(p_1) = 1$ and $\norm{p_2} = 6$, $l(p_2) = 4$.
In Figure~\ref{fig:example_paths:b}, it can be proved that we have $\alpha_0(p_1) = (3,5,8,4)$ and $\alpha_2(p_2) = (137, 206, 103, 155, 233, 350, 175)$.
\end{example}

\begin{definition}[Feasibility]\label{def:feasible}
A parity vector $p\in\Pa$ is said to be \textit{feasible} if it has at least one occurrence, i.e. if $\alpha_{0}(p)$ is defined. 
\end{definition}
The question ``Are all parity vectors feasible?'' is answered positively in \cite{wirsching1998the} (Lemma 3.1). This result is key to our work and we reformulate it in terms of occurrences of parity vectors:

\if 0\mode
\begin{figure}[t!]
  \centering

        \begin{tikzpicture}[scale=0.4]

            \draw [fill] (0,0) circle [radius=0.1];
            \draw [thick] [->] (0, 0) -- (-1, 0);
            \draw [fill] (-1, 0) circle [radius=0.1];

            \draw [thick] [->] (-1, 0) -- (-2, 0);
            \draw [fill] (-2, 0) circle [radius=0.1];

            \draw [thick] [->] (-2, 0) -- (-2, -1);
            \draw [fill] (-2, -1) circle [radius=0.1];

            \draw [thick] [->] (-2, -1) -- (-3, -1);
            \draw [fill] (-3, -1) circle [radius=0.1];

            \draw [thick] [->] (-3, -1) -- (-3, -2);
            \draw [fill] (-3, -2) circle [radius=0.1];

            \node [below] at (0, 0) {${\scriptstyle 11 }$};
            \node [right] at (-3, -2) {${\scriptstyle 10 }$};

            \draw [dashed] [->] (-3, -2)  to [out=-100,in=-100,looseness=1] (-6, -1);
            \node at (-5,-3) {${\scriptstyle+3^3}$};

            \draw [dashed] [->] (0, 0)  to [out=100,in=100,looseness=1] (-3, 1);
            \node [above] at (-1,1) {${\scriptstyle+2^5}$};

            \begin{scope}[xshift=-3cm,yshift=1cm]

                \draw [fill] (0,0) circle [radius=0.1];
                \draw [thick] [->] (0, 0) -- (-1, 0);
                \draw [fill] (-1, 0) circle [radius=0.1];

                \draw [thick] [->] (-1, 0) -- (-2, 0);
                \draw [fill] (-2, 0) circle [radius=0.1];

                \draw [thick] [->] (-2, 0) -- (-2, -1);
                \draw [fill] (-2, -1) circle [radius=0.1];

                \draw [thick] [->] (-2, -1) -- (-3, -1);
                \draw [fill] (-3, -1) circle [radius=0.1];

                \draw [thick] [->] (-3, -1) -- (-3, -2);
                \draw [fill] (-3, -2) circle [radius=0.1];

                \node [below] at (0, 0) {${\scriptstyle 43 }$};
                \node [right] at (-3, -2) {${\scriptstyle 37 }$};
                \draw [dashed] [->] (-3, -2)  to [out=-100,in=-100,looseness=1] (-6, -1);
                \node at (-5,-3) {${\scriptstyle+3^3}$};

                \draw [dashed] [->] (0, 0)  to [out=100,in=100,looseness=1] (-3, 1);
                \node [above] at (-1,1) {${\scriptstyle+2^5}$};

                \begin{scope}[xshift=-3cm,yshift=1cm]

                    \draw [fill] (0,0) circle [radius=0.1];
                    \draw [thick] [->] (0, 0) -- (-1, 0);
                    \draw [fill] (-1, 0) circle [radius=0.1];

                    \draw [thick] [->] (-1, 0) -- (-2, 0);
                    \draw [fill] (-2, 0) circle [radius=0.1];

                    \draw [thick] [->] (-2, 0) -- (-2, -1);
                    \draw [fill] (-2, -1) circle [radius=0.1];

                    \draw [thick] [->] (-2, -1) -- (-3, -1);
                    \draw [fill] (-3, -1) circle [radius=0.1];

                    \draw [thick] [->] (-3, -1) -- (-3, -2);
                    \draw [fill] (-3, -2) circle [radius=0.1];

                    \node [below] at (0, 0) {${\scriptstyle 75 }$};
                    \node [right] at (-3, -2) {${\scriptstyle 64 }$};
                    \draw [dashed] [->] (-3, -2)  to [out=-100,in=-100,looseness=1] (-6, -1);
                    \node at (-5,-3) {${\scriptstyle+3^3}$};

                    \draw [dashed] [->] (0, 0)  to [out=100,in=100,looseness=1] (-3, 1);
                    \node [above] at (-1,1) {${\scriptstyle+2^5}$};

                    \begin{scope}[xshift=-3cm,yshift=1cm]
                        \draw [fill] (0,0) circle [radius=0.1];
                        \draw [thick] [->] (0, 0) -- (-1, 0);
                        \draw [fill] (-1, 0) circle [radius=0.1];

                        \draw [thick] [->] (-1, 0) -- (-2, 0);
                        \draw [fill] (-2, 0) circle [radius=0.1];

                        \draw [thick] [->] (-2, 0) -- (-2, -1);
                        \draw [fill] (-2, -1) circle [radius=0.1];

                        \draw [thick] [->] (-2, -1) -- (-3, -1);
                        \draw [fill] (-3, -1) circle [radius=0.1];

                        \draw [thick] [->] (-3, -1) -- (-3, -2);
                        \draw [fill] (-3, -2) circle [radius=0.1];

                        \node [below] at (0, 0) {${\scriptstyle 107 }$};
                        \node [right] at (-3, -2) {${\scriptstyle 91 }$};
                    \end{scope}

                \end{scope}

            \end{scope}

        \end{tikzpicture}

\caption{\small Illustration of Theorem~\ref{th:struct_occ}. Structure of the set of occurrences of the parity vector $p=\;\leftarrow \leftarrow \downarrow \leftarrow \downarrow$, we have $l(p)=3$ and $\norm{p}=5$. For this parity vector $p$, we have $\alpha_{0,0}=11$ and $\alpha_{0,-1}=10$. As we can see, $\alpha(p)$ has a simple arithmetical structure.}\label{fig:G(s)}

\end{figure}
\fi 

\begin{restatable}[All parity vectors are feasible]{theorem}{thstructocc}
\label{th:struct_occ}
Let $p\in\Pa$. Then:
\begin{enumerate}
    \vspace{-0.5ex}
 \item $p \text{ is feasible i.e. } \alpha_{0} = (\alpha_{0,0},\ldots,\alpha_{0,-1})\in\N^{\norm{p}+1} \text{ is defined}$\label{point:so1}

\item $\alpha_{0,0} < 2^{\norm{p}}$ and $\alpha_{0,-1} < 3^{l(p)}$

\item \label{point:so3}  Finally we can completely characterize $\alpha_{i,0}$ and $\alpha_{i,-1}$ with: $\alpha_{i,0} = 2^{\norm{p}}i + \alpha_{0,0}$ and 
$\alpha_{i,-1} = 3^{l(p)}i + \alpha_{0,-1}$
\end{enumerate}
\vspace{-2ex}
\end{restatable}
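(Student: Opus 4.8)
The plan is to prove the three points by a single induction on the norm $\norm{p}$, translating the Collatz maps $T_0,T_1$ into affine recurrences on the endpoints of occurrences. The base case $p=\epsilon$ is immediate: $\alpha(\epsilon) = \{(i) : i\in\N\}$, so $\alpha_{i,0}=\alpha_{i,-1}=i$, and all three claims hold with $\norm{p}=l(p)=0$. For the inductive step I would peel off the \emph{first} arrow of $p$, writing $p = a_0\cdot p'$ with $a_0\in\{\downarrow,\leftarrow\}$ and $\norm{p'}=\norm{p}-1$. The subtlety here is that knowing the structure of $\alpha(p')$ does not immediately give $\alpha(p)$, because an occurrence of $p$ requires $o_0$ to be a $T$-preimage of $o_1$, and $o_1$ must itself be $\alpha_{j,0}(p')$ for some $j$ — i.e. one must run the recurrence \emph{backwards} through $a_0$.

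The key observation is that preimages under the Collatz maps are themselves arithmetic progressions: $\{x : T_0(x)=y\} = \{2y\}$, a single residue, and $\{x : T_1(x)=y\} = \{ (2y-1)/3 \}$ when $2y\equiv 1 \bmod 3$ and empty otherwise — but crucially, as $y$ ranges over the arithmetic progression $\alpha_{j,0}(p') = 2^{\norm{p'}} j + \alpha_{0,0}(p')$, exactly one residue of $j$ modulo $3$ makes $2y-1$ divisible by $3$, and for that residue the preimage is again an arithmetic progression with common difference $3\cdot 2^{\norm{p'}}\!/\!\gcd = 2^{\norm{p'}}$ scaled appropriately. Carrying this through, I expect to obtain: if $a_0=\downarrow$ then every $\alpha_{j,0}(p')$ has a unique preimage $2\alpha_{j,0}(p')$, so $\alpha(p)$ is in bijection with $\alpha(p')$ and $\alpha_{i,0}(p) = 2\alpha_{i,0}(p') = 2^{\norm{p}} i + 2\alpha_{0,0}(p')$, while the last coordinate is unchanged, $\alpha_{i,-1}(p)=\alpha_{i,-1}(p')$, consistent with $l(p)=l(p')$. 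If $a_0=\leftarrow$ then only one in three of the $\alpha_{j,0}(p')$ has a (unique) preimage; selecting $j \equiv j_0 \bmod 3$ and reindexing $j = 3i+j_0$ gives $\alpha_{i,0}(p)$ as an arithmetic progression with common difference $2^{\norm{p}}$, and on the last coordinate $\alpha_{i,-1}(p) = T_1\!\big(\alpha_{3i+j_0,-1}(p')\big) = 3\cdot 2^{\;?}\cdots$ — more precisely, since $T_1$ acts affinely and $\alpha_{j,-1}(p')$ has difference $3^{l(p')}$, one checks $T_1$ sends the sub-progression $j\equiv j_0$ to a progression with difference $3^{l(p')+1}=3^{l(p)}$. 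Point 3 then follows by reading off these closed forms; point 1 (feasibility) follows because in both cases at least the progression is nonempty (the "one in three" is always nonempty); and point 2 follows because the affine maps only ever \emph{increase} indices in a controlled way — formally, $\alpha_{0,0}(p) < 2^{\norm{p}}$ because $\alpha_{0,0}(p)$ is the unique element of its residue class mod $2^{\norm{p}}$ lying in $[0,2^{\norm{p}})$, and similarly $\alpha_{0,-1}(p) < 3^{l(p)}$.

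An alternative, perhaps cleaner, route is to prove point 3 \emph{first} in the form of the following normalisation: associate to each parity vector $p$ the affine map $A_p(x) = \tfrac{3^{l(p)}}{2^{\norm{p}}} x + c_p$ that $p$ induces on occurrences (this is the classical parity-vector encoding of Terras), observe that $\alpha(p) = \{\,i \mapsto (\text{integer occurrence starting at } 2^{\norm{p}} i + r_p)\,\}$ where $r_p$ is determined by a CRT condition, and verify that $2^{\norm{p}} i + r_p \mapsto 3^{l(p)} i + (\text{last coordinate})$ is exactly the statement that $A_p$ maps the progression $2^{\norm{p}}\N + r_p$ onto $3^{l(p)}\N + A_p(r_p)$, which is automatic from the slope of $A_p$ being $3^{l(p)}/2^{\norm{p}}$. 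Feasibility (point 1) is then the statement that the CRT system defining $r_p$ is consistent, which holds because each new $\leftarrow$ imposes one congruence modulo a power of $3$ that is coprime to the power-of-$2$ modulus already fixed, and each $\downarrow$ imposes nothing new on the residue mod $3^{l(p)}$; the bounds in point 2 are then just "$r_p$ chosen minimal in $[0,2^{\norm{p}})$" and "$A_p(r_p) = \alpha_{0,-1} \in [0,3^{l(p)})$ by the same minimality transported through $A_p$".

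The main obstacle I anticipate is the bookkeeping in the $\leftarrow$ case: one must show that passing to the preimage (a) is nonempty for exactly one residue class of the index modulo $3$, (b) after reindexing yields \emph{precisely} common difference $2^{\norm{p}}$ (not a proper multiple or divisor), and (c) keeps the first occurrence $\alpha_{0,0}(p)$ strictly below $2^{\norm{p}}$ and $\alpha_{0,-1}(p)$ strictly below $3^{l(p)}$ — the last inequality is the one that requires care, since naively $T_1$ could push the last coordinate up, and the correct argument is that $\alpha_{0,-1}(p)$ is forced to be the \emph{minimal} nonnegative element of its class modulo $3^{l(p)}$ by the minimality of $\alpha_{0,0}(p)$ together with the bijection between occurrences of $p$ and their endpoints. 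Everything else is routine divisibility and CRT arithmetic, which I would not spell out in full.
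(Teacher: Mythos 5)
Your first argument is sound and it is a genuinely different route from the paper's. The paper does not derive the structure of $\alpha(p)$ from first principles: it converts $p$ into Wirsching's feasible-vector formalism (Appendix~\ref{app:feasvec}) and quotes two lemmas of \cite{wirsching1998the} -- the backtracing function is affine, $v_s(b)=\frac{2^{\norm{s}}}{3^{l(s)}}b-r(s)$ with $r(s)\geq 0$, and $p$ is backward feasible for $b$ exactly when $b$ lies in a single residue class modulo $3^{l(s)}$. Points 1--3 then follow by pushing the progression of admissible endpoints through the affine $v_s$, and the bound $\alpha_{0,0}<2^{\norm{p}}$ is read off from $r(s)\geq 0$ together with $\alpha_{0,-1}<3^{l(p)}$. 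Your induction on $\norm{p}$, peeling the leading arrow and pulling the start points back through $T_0^{-1}$ or $T_1^{-1}$, avoids the external lemmas and the explicit formula for $r(s)$ altogether, and it yields, as part of the induction, the stronger picture that the starts (resp.\ ends) of occurrences form the full nonnegative part of one residue class modulo $2^{\norm{p}}$ (resp.\ $3^{l(p)}$), from which point 2 is immediate; the price is exactly the mod-$3$ reindexing bookkeeping you flag. Your alternative sketch (affine map of slope $3^{l(p)}/2^{\norm{p}}$ plus a residue characterisation) is essentially the paper's argument run forwards rather than backwards.

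One correction to the $\leftarrow$ case of your induction: since you peel the \emph{first} arrow, the last coordinate of an occurrence of $p=\;\leftarrow\cdot\, p'$ is inherited unchanged from the underlying occurrence of $p'$; it is not $T_1(\alpha_{3i+j_0,-1}(p'))$. The common difference triples purely because you keep only every third index: $\alpha_{i,-1}(p)=\alpha_{3i+j_0,-1}(p')=3^{l(p)}i+\bigl(3^{l(p')}j_0+\alpha_{0,-1}(p')\bigr)$, and since $j_0\leq 2$ and $\alpha_{0,-1}(p')<3^{l(p')}$ the constant term is $<3^{l(p)}$, which settles the bound you were worried about with no appeal to minimality (likewise $\alpha_{0,0}(p)=(2\alpha_{j_0,0}(p')-1)/3<2^{\norm{p}}$). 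Applying $T_1$ to the last coordinate belongs to the other decomposition, appending the final arrow, which is what the paper's Lemma~\ref{lem:undeux} and Theorem~\ref{th:end} do; similarly, in your CRT sketch the parity constraints imposed by successive arrows bear on the \emph{start} modulo powers of $2$ (as in Lemma~\ref{lem:bij}), while the single modulo-$3^{l(p)}$ condition bears on the endpoint. With these points straightened out, and the minor remark that $y\mapsto 2y$ and $y\mapsto(2y-1)/3$ are increasing so the re-indexing respects the ordering of occurrences by first element, your induction does prove all three items.
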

\if 0\mode
\begin{proof}
This Theorem is essentially a reformulation of Lemma 3.1 in \cite{wirsching1998the}. We postpone the proof to Appendix~\ref{app:feasvec} because the concepts which the proof needs (introduced in \cite{wirsching1998the}) will not be used in the rest of this paper. 
\end{proof}
\else
\begin{proof}
This Theorem is essentially a reformulation of Lemma 3.1 in \cite{wirsching1998the}. We give the proof in \cite{Collatz1arxiv}, Appendix~\appFea. 
\end{proof}
\fi
\begin{example}
\if 0\mode Figure~\ref{fig:G(s)} \else Figure~\figStruct in \cite{Collatz1arxiv} \fi illustrates the knowledge that Theorem~\ref{th:struct_occ} gives on the structure of $\alpha(p)$, the set of occurrences\footnote{The result of \cite{arithProg} implies that one can prove the Collatz conjecture by only proving it for $\alpha_{i,j}(p)$ for all $i\in\N$, for any $p\in\Pa$, for any $0 \leq j \leq \norm{p}$.} of the parity vector $p$.
\end{example}
\section{First occurrence of parity vectors}\label{sec:first}
In this Section, we show that there is a direct link between the $\norm{p}$ arrows of a parity vector $p$ and the $\norm{p}$ bits of the binary representation  -- including potential leading $0$s -- of $\alpha_{0,0}(p)$ (Theorem~\ref{th:begin}). Then, we show that first occurrences of parity vectors can be arranged in a remarkably symmetric binary tree: the \theTree\ (Theorem~\ref{th:end}). As we work in binary, let's introduce some notation:
\begin{definition}[The set $\B^{\ast}$]
Let $\Bs$ be the set of finite (possibly empty) words written on the alphabet $\B=\{\texttt{0},\texttt{1}\}$. The empty word, is denoted by $\eta$. We define $\bullet$, the concatenation operator on these words and we use exponentiation in its usual meaning. Finally, for $\omega \in \Bs$, $|\omega|$ refers to the length (number of symbols) in the binary word $\omega$.
\end{definition}
\begin{definition}[The interpretations\footnote{We do not use the notation $[\![\cdot]\!]$ and its inverse $[\![\cdot]\!]^{-1}$ of \cite{DBLP:journals/tcs/Colussi11,DBLP:journals/tcs/Hew16} in order to avoid confusion. Indeed, in \cite{DBLP:journals/tcs/Colussi11,DBLP:journals/tcs/Hew16}, the use of this notation is meant to preserve leading $\texttt{0}s$ while we crucially need to control them in order to define the encoding function $\E$.} $\I$ and $\I^{-1}_{n}$]\label{def:inter}
Each word $\omega\in\Bs$ can, in a standard way, be interpreted as the binary representation of a number in $\N$. The function $\I:\Bs\to\N$ gives this interpretation. By convention, $\I(\eta) = 0$. Reciprocally, the partial function $\I^{-1}_{n}: \N \to \Bs_n$, where $\Bs_n$ is the set of $\omega\in\Bs$ with $|\omega| = n$, gives the binary representation of $x\in\N$ on $n$ bits. The value of $\I^{-1}_{n}(x)$ is defined only when $n \geq \lfloor \text{log}_2(2x+1) \rfloor$. We set $\I^{-1}_{0}(0) = \eta$. Finally, by $\I^{-1}(x)$ we refer to the binary representation of $x\in\N$ without any leading $\texttt{0}$. Formally, $\I^{-1}(x) = \I^{-1}_{\lfloor \text{log}_2(x) \rfloor + 1} (x) $ if $x\neq0$ and $\I^{-1}(0) = \I^{-1}_1(0) = \texttt{0}$.
\end{definition}
\begin{example}\label{ex:hiho}
$\I(\texttt{11}) = \I(\texttt{0011}) = 3$, $\I^{-1}(3) = \I^{-1}_{2}(3) = \texttt{11}$ and $\I^{-1}_{7}(3) = \texttt{0000011}$.
\end{example}
\subsection{Constructing $\alpha_{0,0}$}\label{sec:begin}
Let's notice that we have the following bijection (similarly introduced in \cite{Terras1976}):
\begin{restatable}{lemma}{lembij}
\label{lem:bij}
Define $\Pa_n = \{ p \in \Pa \text{ with } \norm{p} = n\}$. Then the function $f: \Pa_n \to \{0,\ldots,2^{n}-1\}$ defined by $f(p) = \alpha_{0,0}(p) $ is a bijection.
\end{restatable}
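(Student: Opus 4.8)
The plan is to show that $f$ is a bijection between two finite sets of the same cardinality — $|\Pa_n| = 2^n$ since a parity vector of norm $n$ is a word of length $n$ over a two-letter alphabet, and $|\{0,\dots,2^n-1\}| = 2^n$ — so it suffices to establish that $f$ is well-defined and injective; surjectivity then comes for free. Well-definedness is immediate from Theorem~\ref{th:struct_occ}: every $p\in\Pa_n$ is feasible, so $\alpha_{0,0}(p)$ is defined, and moreover point~2 of that theorem gives $\alpha_{0,0}(p) < 2^{\norm{p}} = 2^n$, so the value indeed lands in $\{0,\dots,2^n-1\}$.

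For injectivity, the cleanest route is induction on $n$, exploiting the recursive structure of $\alpha_{0,0}$ under stripping the first arrow of $p$. Write $p = a_0 \cdot p'$ with $a_0\in\{\downarrow,\leftarrow\}$ and $p'\in\Pa_{n-1}$. I would argue that the value $\alpha_{0,0}(p)$ is determined by $a_0$ together with $\alpha_{0,0}(p')$ via an explicit, invertible rule: the smallest $x$ whose $T$-trajectory begins with arrow $a_0$ and then realizes the occurrence $\alpha_0(p')$ is obtained by (i) taking $y=\alpha_{0,0}(p')$, (ii) finding the smallest preimage $x$ of $y$ under $T_{a_0}$ with the correct parity — for $a_0=\downarrow$ this forces $x$ even and $x=2y$ is the smallest candidate with the right parity, while for $a_0=\leftarrow$ this forces $x$ odd, i.e. $x = (2y-1)/3$ when that is an odd integer, and otherwise one shifts $y$ by a multiple of $3^{l(p')}$ using point~3 of Theorem~\ref{th:struct_occ} until it is. In either case the map $\alpha_{0,0}(p') \mapsto \alpha_{0,0}(p)$, for fixed $a_0$, is strictly monotincreasing, and the two cases ($a_0=\downarrow$ giving an even result, $a_0=\leftarrow$ giving an odd result) have disjoint images distinguished by parity. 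Hence from $\alpha_{0,0}(p)$ one recovers $a_0$ (its parity) and then $\alpha_{0,0}(p')$, so by the induction hypothesis one recovers $p'$, hence $p$. The base case $n=0$ is trivial: $\Pa_0=\{\epsilon\}$ and $\alpha_{0,0}(\epsilon)=0$.

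Alternatively — and this is perhaps what the paper intends, given the forward reference to Theorem~\ref{th:begin} linking the arrows of $p$ to the bits of $\I_n^{-1}(\alpha_{0,0}(p))$ — one can sidestep the induction by exhibiting the inverse directly: read the parity vector $p$ as a bit string (sending $\downarrow\mapsto\texttt{0}$, $\leftarrow\mapsto\texttt{1}$, in the appropriate endianness) and check that interpreting that string via $\I$ recovers $\alpha_{0,0}(p)$. I expect the main obstacle to be pinning down the arithmetic of the $\leftarrow$ step cleanly: unlike the $\downarrow$ case, the smallest odd $T_1$-preimage of $y$ is not simply $(2y-1)/3$ but rather $(2^{\norm{p'}}j + \alpha_{0,0}(p'))$ for the least $j$ making a valid odd preimage, and one must invoke point~3 of Theorem~\ref{th:struct_occ} to see this is consistent and to verify the resulting value stays below $2^n$ — i.e. that the monotone case-maps together exactly tile $\{0,\dots,2^n-1\}$ by the even/odd split. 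Once that bookkeeping is done, injectivity (and with it surjectivity, by cardinality) follows immediately.
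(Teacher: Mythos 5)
Your reduction is the same as the paper's up to the key step: by cardinality ($|\Pa_n|=2^n$) only injectivity is needed, and well-definedness follows from Theorem~\ref{th:struct_occ}. But the injectivity argument you give is where the trouble lies, and it is much more machinery than the statement needs. The paper's proof is a two-line determinism argument: if $\alpha_{0,0}(p_1)=\alpha_{0,0}(p_2)$ with $\norm{p_1}=\norm{p_2}=n$, then since every element of an occurrence is obtained from the previous one by applying $T$, the two first occurrences coincide term by term; and each arrow is forced by the parity of the corresponding term (inside a valid occurrence, $\downarrow$ requires an even entry and $\leftarrow$ an odd one, or the tuple would leave $\N$), so $p_1=p_2$. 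Your first-arrow induction could be repaired along these lines, but only in the backward direction you do not actually take: from $x=\alpha_{0,0}(p)$ the parity of $x$ determines $a_0$, and then $\alpha_{0,0}(p')=T(x)\bmod 2^{n-1}$ by points 2 and 3 of Theorem~\ref{th:struct_occ}, after which the induction hypothesis applies.

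As written, three steps fail. First, your explicit rule for the $\leftarrow$ case shifts $y=\alpha_{0,0}(p')$ by multiples of $3^{l(p')}$; this confuses the two periodicities of point~3 of Theorem~\ref{th:struct_occ}: \emph{starting} values of occurrences of $p'$ are spaced by $2^{\norm{p'}}$, and it is the \emph{ending} values that are spaced by $3^{l(p')}$. Concretely, for $p'=\;\downarrow\downarrow$ (so $y=0$, $l(p')=0$) your rule returns $x=1$ for $p=\;\leftarrow\downarrow\downarrow$ (shifting $y$ by $1$ until $(2y-1)/3$ is an odd integer, reached at $y=2$), but $1$ follows $\leftarrow\downarrow\leftarrow$, not $\leftarrow\downarrow\downarrow$; the true value is $\alpha_{0,0}(\leftarrow\downarrow\downarrow)=5$, obtained with the correct shifts $y\in\{0,4,8,\dots\}$ at $y=8$. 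Second, the claimed strict monotonicity of the $\leftarrow$-case map is false: $\alpha_{0,0}(\downarrow\downarrow)=0<2=\alpha_{0,0}(\downarrow\leftarrow)$, yet $\alpha_{0,0}(\leftarrow\downarrow\downarrow)=5>1=\alpha_{0,0}(\leftarrow\downarrow\leftarrow)$; monotonicity is not needed (invertibility suffices), but as stated it is a wrong step your argument leans on. Third, the proposed shortcut of reading $p$ as a bit string by a fixed relabeling $\downarrow\mapsto\texttt{0}$, $\leftarrow\mapsto\texttt{1}$ is not the inverse of $f$ in either endianness: $\E(\leftarrow\downarrow\downarrow)=\texttt{101}$, not $\texttt{100}$ or $\texttt{001}$; the bit correspondence of Theorem~\ref{th:begin} records the \emph{admissibility} of each appended arrow with respect to the evolving value $\alpha_{0,-1}$, not the arrows themselves.
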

\if 0\mode
\begin{proof}
By cardinality, because $|\Pa_n| = |\{0,\ldots,2^n-1\}| = 2^n$, we just have to prove the injectivity of $f$. Let $p_1,p_2\in\Pa_n$ such that $f(p_1)=f(p_2)$. We write $p_1 = a_0 \cdot \ldots \cdot a_{n-1}$ and $p_2 = a'_0 \cdot \ldots \cdot a'_{n-1}$ with $a_i,a'_i\in\{\downarrow,\leftarrow\}$. Since $\alpha_{0,0}(p_1) = \alpha_{0,0}(p_2)$ and that the Collatz process is deterministic we deduce:
\begin{align*}
\alpha_{0,0}(p_1) &= \alpha_{0,0}(p_2)\\
\alpha_{0,1}(p_1) &= \alpha_{0,1}(p_2)\\
&\vdots\\
\alpha_{0,-1}(p_1) &= \alpha_{0,-1}(p_2)
\end{align*}
Thus, by Definition~\ref{def:occ} we deduce that $a_{i} = a'_{i}$ for $0 \leq i < n$. Thus $p_1=p_2$ which ends the proof.
\end{proof}
\else
\begin{proof}
\textbf{(Sketch)} By cardinality, only injectivity is to prove which comes by determinism of the Collatz process. Full proof in the expanded version \cite{Collatz1arxiv}.
\end{proof}
\fi

\if 0\mode
\input{figures/path_gene}
\fi

We can now define the \textit{Collatz encoding} of a parity vector $p\in\Pa$:
\begin{definition}[Collatz encoding of a parity vector $p$]\label{def:encode}
We define $\E: \Pa \to \Bs$ the \textit{Collatz encoding function} of parity vectors to be:
$ \E(p) = \I^{-1}_{\norm{p}}(\alpha_{0,0}(p))$.
The function $\E$ is well defined since, by Theorem~\ref{th:struct_occ}, $\alpha_{0,0}(p) < 2^{\norm{p}}$ . By Lemma~\ref{lem:bij}, $\E$ is bijective hence $\E^{-1}: \Bs \to \Pa$ is naturally defined.
\end{definition}
\begin{example}
$\E(p)$ is the binary representation of $\alpha_{0,0}(p)$ on $\norm{p}$ bits. We have: $\E(\downarrow\downarrow) = \texttt{00}$ or $\E(\leftarrow\downarrow\downarrow) = \texttt{101}$ \if 0\mode(see Figure~\ref{fig:order})\else (see Figure~\figGene in \cite{Collatz1arxiv})\fi. 
\end{example}
\begin{definition}[Admissibility of an arrow]
Let $a\in\{\downarrow,\leftarrow\}$. The arrow $a$ is said to be admissible for the number $x$ if and only if: ($a=\;\downarrow$ and $x$ is even) or ($a=\;\leftarrow$ and $x$ is odd).
\end{definition}
\begin{restatable}{lemma}{lemundeux}
\label{lem:undeux}
Let $p\in\Pa$ and $a\in\{\downarrow,\leftarrow\}$. Consider $\alpha_{0}(\composeP{p}{a}) = (\alpha_{0,0}(\composeP{p}{a}), \ldots, \alpha_{0,\norm{\composeP{p}{a}}}(\composeP{p}{a}))$. Then two cases:
\begin{itemize}
\item If $a$ is admissible for $\alpha_{0,-1}(p)$ then $(\alpha_{0,0}(\composeP{p}{a}), \ldots, \alpha_{0,\norm{p}}(\composeP{p}{a}))$ is the first occurrence of $p$, i.e. we have: $\alpha_{0}(p) = (\alpha_{0,0}(\composeP{p}{a}), \ldots, \alpha_{0,\norm{p}}(\composeP{p}{a}))$.
\item If $a$ is not admissible for $\alpha_{0,-1}(p)$ then $(\alpha_{0,0}(\composeP{p}{a}), \ldots, \alpha_{0,\norm{p}}(\composeP{p}{a}))$ is the second occurrence of $p$, i.e. we have: $\alpha_{1}(p) = (\alpha_{0,0}(\composeP{p}{a}), \ldots, \alpha_{0,\norm{p}}(\composeP{p}{a}))$.
\end{itemize}
\end{restatable}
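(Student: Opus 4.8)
The plan is to understand how appending an arrow $a$ to $p$ affects the set of occurrences, using the arithmetic structure from Theorem~\ref{th:struct_occ}. The key observation is that an occurrence of $p \cdot a$ is exactly an occurrence of $p$ whose last element $o_{\norm{p}}$ admits the arrow $a$ (i.e., $o_{\norm{p}}$ is even if $a = \downarrow$, odd if $a = \leftarrow$); the tuple $(\alpha_{0,0}(p\cdot a), \ldots, \alpha_{0,\norm{p}}(p\cdot a))$ is precisely the first such occurrence of $p$, and the claim is to identify whether this is $\alpha_0(p)$ or $\alpha_1(p)$.

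First I would recall from Theorem~\ref{th:struct_occ}, point~\ref{point:so3}, that the last elements of the successive occurrences of $p$ are $\alpha_{i,-1}(p) = 3^{l(p)} i + \alpha_{0,-1}(p)$ for $i \in \N$. Since $l(p) \geq 0$, the modulus $3^{l(p)}$ is odd, so $\alpha_{i,-1}(p) \bmod 2$ alternates as $i$ increases: $\alpha_{0,-1}(p), \alpha_{1,-1}(p), \alpha_{2,-1}(p), \ldots$ have parities $\alpha_{0,-1}(p), \alpha_{0,-1}(p)+1, \alpha_{0,-1}(p), \ldots$ modulo $2$. Hence among the first two occurrences $\alpha_0(p), \alpha_1(p)$, exactly one has a last element compatible with a given arrow $a$: if $a$ is admissible for $\alpha_{0,-1}(p)$, then it is admissible for $\alpha_{0,-1}(p)$ but not $\alpha_{1,-1}(p)$, so the smallest occurrence of $p$ extendable by $a$ is $\alpha_0(p)$; if $a$ is not admissible for $\alpha_{0,-1}(p)$, it is admissible for $\alpha_{1,-1}(p)$, so the smallest such occurrence is $\alpha_1(p)$.

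Next I would make precise the bijection between occurrences of $p\cdot a$ and admissible occurrences of $p$: given an occurrence $(o_0, \ldots, o_{\norm{p}})$ of $p$ with $a$ admissible for $o_{\norm{p}}$, appending $T_0(o_{\norm{p}})$ or $T_1(o_{\norm{p}})$ (whichever $a$ dictates) gives an occurrence of $p\cdot a$, and conversely every occurrence of $p\cdot a$ arises this way by Definition~\ref{def:occ}; moreover this correspondence preserves the ordering by first element (it is the identity on the first $\norm{p}+1$ coordinates). Therefore $(\alpha_{0,0}(p\cdot a), \ldots, \alpha_{0,\norm{p}}(p\cdot a))$, being the first occurrence of $p\cdot a$ truncated, equals the first occurrence of $p$ that is admissible for $a$ — which by the previous paragraph is $\alpha_0(p)$ in the admissible case and $\alpha_1(p)$ in the non-admissible case. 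This is exactly the statement.

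The main obstacle is a bookkeeping subtlety rather than a deep difficulty: one must verify carefully that the ordering of $\alpha(p\cdot a)$ by first element restricts to the ordering of the admissible sub-family of $\alpha(p)$ by first element — i.e., that truncation is order-preserving and that no occurrence of $p\cdot a$ is "skipped" — which follows because the first element $o_0$ determines the entire occurrence by determinism of the Collatz process (as used in Lemma~\ref{lem:bij}). A second minor point to check is the edge case where $p = \epsilon$, so that $\alpha_{0,-1}(p) = \alpha_{0,0}(p)$ and $l(p) = 0$; here the argument still goes through since $3^0 = 1$ is odd and $\alpha_i(\epsilon) = (i)$, so consecutive occurrences still alternate in parity.
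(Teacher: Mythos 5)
Your proposal is correct, and its skeleton matches the paper's: in both, occurrences of $\composeP{p}{a}$ are identified with occurrences of $p$ whose last element has the parity required by $a$, and the admissible case is handled identically (the first occurrence of $p$ extends to an occurrence of $\composeP{p}{a}$ which must be the first one, since truncation preserves the first element and determinism makes first elements determine occurrences). Where you diverge is the non-admissible case. The paper works at the \emph{front} of the occurrence: it truncates $\alpha_0(\composeP{p}{a})$, notes it cannot be $\alpha_0(p)$ because of the parity mismatch, and then pins it down as $\alpha_1(p)$ using the first-element arithmetic of Theorem~\ref{th:struct_occ}, namely $\alpha_{0,0}(\composeP{p}{a}) < 2^{\norm{p}+1}$ together with $\alpha_{i,0}(p) = 2^{\norm{p}}i + \alpha_{0,0}(p)$, which rules out all indices $i \geq 2$. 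You instead work at the \emph{back}: since $\alpha_{i,-1}(p) = 3^{l(p)}i + \alpha_{0,-1}(p)$ and $3^{l(p)}$ is odd, the parities of the last elements alternate with $i$, so the first occurrence of $p$ admissible for $a$ is $\alpha_0(p)$ or $\alpha_1(p)$ according to the admissibility of $a$ for $\alpha_{0,-1}(p)$, and both cases of the lemma fall out of one statement. Both arguments are valid and rest on Theorem~\ref{th:struct_occ}; yours is slightly more uniform and in fact gives a little more (the $j$-th occurrence of $\composeP{p}{a}$ truncates to $\alpha_{2j}(p)$ or $\alpha_{2j+1}(p)$, i.e.\ exactly every other occurrence of $p$ is extendable by $a$), while the paper's version avoids discussing the order-preserving truncation correspondence explicitly and gets by with the bound on $\alpha_{0,0}$ alone.
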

\begin{proof}
\begin{itemize} 
\item If $a$ is admissible for $\alpha_{0,-1}(p)$ then $p \cdot a$ is forward feasible for $\alpha_{0,0}(p)$ and $(\alpha_{0,0}(p),\ldots,\alpha_{0,-1}(p), T(\alpha_{0,-1}(p)))$ is an occurrence of $p \cdot a$. It has to be the first occurrence of $p \cdot a$ otherwise, the existence of a lower occurrence of $p \cdot a$ would contradict the fact that $\alpha_{0}(p) = (\alpha_{0,0}(p),\ldots,\alpha_{0,-1}(p))$ is the first occurrence of $p$.
\item If $a$ is not admissible for $\alpha_{0,-1}(p)$, consider $\alpha_{0}(p\cdot a) = (o_0,\ldots,o_{\norm{p}+1})$ the first occurrence of $p\cdot a$. Then $(o_0,\ldots,o_{\norm{p}})$ is an occurrence of $p$. It cannot be the first one since the first occurrence of $p$ is followed by an arrow admissible for $\alpha_{0,-1}(p)$. However, by Theorem~\ref{th:struct_occ} we know that $o_0 = \alpha_{0,0}(p\cdot a) < 2^{\norm{p\cdot a}} = 2^{\norm{p} + 1} = 2 * 2^{\norm{p}}$. Thus we conclude that $(o_0,\ldots,o_{\norm{p}})$ is the second occurrence of $p$, i.e. $o_0 = \alpha_{1,0}(p) = 2^{\norm{p}} + \alpha_{0,0}(p)$ since for all $i\geq 2,\, \alpha_{i,0}(p) \geq 2*2^{\norm{p}}$ by Theorem~\ref{th:struct_occ}.
\end{itemize}
\end{proof}
\begin{restatable}[Recursive structure of $\E$]{theorem}{thbegin}
\label{th:begin}
Let $n\in\N$.
We have $\E(\epsilon) = \eta$. Then, for $p\in\Pa_{n}$ and $a \in \{\downarrow,\leftarrow\}$ we have
    $\E(p \cdot a ) = \textup{\texttt{0}}\bullet \E(p)$  if  $a$ is admissible for $\alpha_{0,-1}(p)$ and $\E(p \cdot a ) = \textup{\texttt{1}}\bullet \E(p)$ otherwise. 
\end{restatable}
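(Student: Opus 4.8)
The proof plan is to peel the statement apart into the base case $\E(\epsilon)=\eta$ and the recursion, and to reduce the recursion to one elementary identity about binary representations: for every $n\in\N$, every integer $x$ with $0\le x<2^{n}$, and every bit $b\in\B$, one has $\I^{-1}_{n+1}\!\left(b\cdot 2^{n}+x\right)=b\bullet\I^{-1}_{n}(x)$. This is immediate from Definition~\ref{def:inter}: the most significant of the $n+1$ bits of $b\cdot 2^{n}+x$ is $b$ (because $0\le x<2^{n}$), and the remaining $n$ bits encode the residue $x$ on $n$ bits. Note that $\I^{-1}_{n}(x)$ is defined here since $x<2^{n}$, and $\I^{-1}_{n+1}(b\cdot 2^{n}+x)$ is defined since $b\cdot 2^{n}+x<2^{n+1}$.

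First I would settle the base case. The empty parity vector $\epsilon$ has $\norm{\epsilon}=0$, and an occurrence of $\epsilon$ is a $1$-tuple $(o_{0})$ subject to no constraint, so $\alpha_{0,0}(\epsilon)=0$; hence $\E(\epsilon)=\I^{-1}_{0}(0)=\eta$ by the convention $\I^{-1}_{0}(0)=\eta$ in Definition~\ref{def:inter}.

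For the recursion, fix $p\in\Pa_{n}$ and $a\in\{\downarrow,\leftarrow\}$, and write $\E(\composeP{p}{a})=\I^{-1}_{n+1}(\alpha_{0,0}(\composeP{p}{a}))$, which is well defined since $\alpha_{0,0}(\composeP{p}{a})<2^{\norm{\composeP{p}{a}}}=2^{n+1}$ by Theorem~\ref{th:struct_occ}. The idea is to locate $\alpha_{0,0}(\composeP{p}{a})$ in terms of $\alpha_{0,0}(p)$ using Lemma~\ref{lem:undeux}. If $a$ is admissible for $\alpha_{0,-1}(p)$, then Lemma~\ref{lem:undeux} says $(\alpha_{0,0}(\composeP{p}{a}),\ldots,\alpha_{0,\norm{p}}(\composeP{p}{a}))=\alpha_{0}(p)$, so in particular $\alpha_{0,0}(\composeP{p}{a})=\alpha_{0,0}(p)<2^{n}$; applying the displayed identity with $b=\texttt{0}$ and $x=\alpha_{0,0}(p)$ gives $\E(\composeP{p}{a})=\texttt{0}\bullet\I^{-1}_{n}(\alpha_{0,0}(p))=\texttt{0}\bullet\E(p)$. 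If $a$ is not admissible for $\alpha_{0,-1}(p)$, then Lemma~\ref{lem:undeux} says that tuple equals $\alpha_{1}(p)$, so $\alpha_{0,0}(\composeP{p}{a})=\alpha_{1,0}(p)$, and point~\ref{point:so3} of Theorem~\ref{th:struct_occ} applied with $i=1$ gives $\alpha_{1,0}(p)=2^{n}+\alpha_{0,0}(p)$ with $0\le\alpha_{0,0}(p)<2^{n}$; the identity with $b=\texttt{1}$ then gives $\E(\composeP{p}{a})=\texttt{1}\bullet\E(p)$. The two cases are mutually exclusive and exhaustive, which finishes the argument.

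I do not expect a genuine obstacle here; the one place to be careful is the index bookkeeping when reading $\alpha_{0,0}$ off the tuples produced by Lemma~\ref{lem:undeux}, and, in the non-admissible case, making sure the occurrence returned is exactly the \emph{second} one (first coordinate $2^{n}+\alpha_{0,0}(p)$), so that no larger occurrence can intervene --- but this is precisely what the arithmetic structure in Theorem~\ref{th:struct_occ} rules out, as already exploited in the proof of Lemma~\ref{lem:undeux}.
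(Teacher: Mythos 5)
Your proposal is correct and follows essentially the same route as the paper's proof: base case from $\I^{-1}_{0}(0)=\eta$, then Lemma~\ref{lem:undeux} giving $\alpha_{0,0}(\composeP{p}{a})=\alpha_{0,0}(p)$ in the admissible case and $\alpha_{0,0}(\composeP{p}{a})=\alpha_{1,0}(p)=2^{\norm{p}}+\alpha_{0,0}(p)$ (via Theorem~\ref{th:struct_occ}) otherwise, interpreted as prepending \texttt{0} or \texttt{1}. Your only addition is to state the prepending step as an explicit identity $\I^{-1}_{n+1}(b\cdot 2^{n}+x)=b\bullet\I^{-1}_{n}(x)$, which the paper leaves implicit.
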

\begin{proof}
By Definition~\ref{def:encode}, we have $\E(\epsilon) = \I^{-1}_{0}(\alpha_{0,0}(\epsilon)) = \I^{-1}_{0}(0)$ and $ \I^{-1}_{0}(0) = \eta$ by Definition~\ref{def:inter}. Hence, $\E(\epsilon) = \eta$. Now, let $p\in\Pa_{n}$, $a \in \{\downarrow,\leftarrow\}$. Two cases: 
\begin{itemize}
        \item If $a$ is admissible for $\alpha_{0,-1}(p)$, by Lemma~\ref{lem:undeux} we have $\alpha_{0,0}(p\cdot a) = \alpha_{0,0}(p)$. Thus we get that $\I^{-1}_{n+1}(\alpha_{0,0}(p \cdot a )) = \texttt{0}\bullet \I^{-1}_{n}(\alpha_{0,0}(p))$ since prepending a $\texttt{0}$ to a binary string doesn't change the number it represents. Hence, $\E(p \cdot a ) = \texttt{0}\bullet \E(p)$.

        \item If $a$ is not admissible for $\alpha_{0,-1}(p)$, by Lemma~\ref{lem:undeux} and Theorem~\ref{th:struct_occ} we get $\alpha_{0,0}(p\cdot a) = \alpha_{1,0}(p) = 2^{\norm{p}} + \alpha_{0,0}(p)$ which corresponds to prepending a bit $\texttt{1}$ to the binary representation of $\alpha_{0,0}(p)$ on $n$ bits. We conclude that $ \I^{-1}_{n+1}(\alpha_{0,0}(p \cdot a )) = \texttt{1}\bullet \I^{-1}_{n}(\alpha_{0,0}(p))$. Hence, $\E(p \cdot a ) = \texttt{1}\bullet \E(p)$.
\end{itemize}
\end{proof}
\begin{example}
\if 0\mode Figure~\ref{fig:order} \else Figure~\figGene in \cite{Collatz1arxiv} \fi
 illustrates Theorem~\ref{th:begin} on parity vectors of $\Pa_0,\Pa_1,\Pa_2,\Pa_3$.
\end{example}
\subsection{Constructing $\alpha_{0,-1}$}\label{sec:end}
 Theorem~\ref{th:begin} relies on knowing $\alpha_{0,-1}$ at each step in order to deduce the admissibility of the arrow which is being added. In this Section, we show that $\alpha_{0,-1}$ can also be recursively constructed. That construction will lead to a binary tree, the \theTree\, in which each node corresponds to the first occurrence of a parity vector. The symmetries of this tree will be crucial to our main result, Theorem~\ref{th:main}. The construction of $\alpha_{0,-1}$ relies on some elementary knowledge about groups of the form $\ZnZ{3^k}$ and their multiplicative subgroup $(\ZnZ{3^k})^*$. We recall the definition and main properties of these objects in \if 0\mode Appendix~\ref{app:structT}\else Appendix~\appT of \cite{Collatz1arxiv}\fi. In particular, we use the notation $2^{-1}_k$ to refer to the modular inverse of $2$ in $\ZnZ{3^k}$. Importantly, $2^{-1}_k$ is a primitive root of $(\ZnZ{3^k})^*$.  Those groups play an important role in our context because of the following result:
\begin{restatable}{lemma}{lemstar}
\label{lem:star}
Let $p\in\Pa$. Then $l(p) \neq 0 \Leftrightarrow \alpha_{0,-1}(p) \in (\ZnZ{3^{l(p)}})^{*}$. If $l(p)=0$, $\alpha_{0,-1}(p) = 0$.
\end{restatable}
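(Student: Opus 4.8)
The plan is to case-split on whether $l(p)=0$, reading each case off Theorem~\ref{th:struct_occ} together with a single arithmetic invariance of the Collatz maps modulo $3$. The case $l(p)=0$ is immediate and also yields the last assertion and the ``$\Leftarrow$'' implication: then $3^{l(p)}=3^{0}=1$, so part~2 of Theorem~\ref{th:struct_occ} forces $\alpha_{0,-1}(p)<1$, i.e. $\alpha_{0,-1}(p)=0$; this is precisely the final sentence, and the ``$\Leftarrow$'' direction is its contrapositive.

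For the ``$\Rightarrow$'' implication, suppose $l(p)\ge 1$. Since $\alpha_{0,-1}(p)<3^{l(p)}$ by Theorem~\ref{th:struct_occ}, it suffices to prove that $3$ does not divide $\alpha_{0,-1}(p)$, for then $\gcd(\alpha_{0,-1}(p),3^{l(p)})=1$ and $\alpha_{0,-1}(p)\in(\ZnZ{3^{l(p)}})^{*}$. The two facts I would use are: (i) a $T_1$-step never outputs a multiple of $3$, because for odd $x$ we have $2\,T_1(x)=3x+1\equiv 1\pmod 3$, hence $T_1(x)\not\equiv 0\pmod 3$; and (ii) a $T_0$-step preserves non-divisibility by $3$, since $3\nmid y$ implies $3\nmid y/2$ when $y$ is even. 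Now write $p=a_0\cdot\ldots\cdot a_{n-1}$ with $n=\norm{p}$, let $\alpha_{0}(p)=(o_0,\ldots,o_n)$ be its first occurrence (which exists by Theorem~\ref{th:struct_occ}), and let $j$ be the largest index with $a_j=\;\leftarrow$ — well defined because $l(p)\ge 1$. By Definition~\ref{def:occ}, $o_{j+1}=T_1(o_j)$, so $3\nmid o_{j+1}$ by (i); and since $a_{j+1},\ldots,a_{n-1}$ are all $\downarrow$, the remaining values $o_{j+2},\ldots,o_n$ are obtained from $o_{j+1}$ by repeated halving, so $3\nmid o_n=\alpha_{0,-1}(p)$ by (ii). This gives the required coprimality and completes the forward direction.

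I do not anticipate a genuine obstacle: the argument is an invariance check modulo $3$ along the tail of a single finite path, with feasibility of $p$ and the bound $\alpha_{0,-1}(p)<3^{l(p)}$ supplied directly by Theorem~\ref{th:struct_occ}. The only mild point of care is the trivial-modulus corner case $l(p)=0$ (what ``$(\ZnZ{1})^{*}$'' means), which is why I peel it off first and reduce it to the plain equality $\alpha_{0,-1}(p)=0$.
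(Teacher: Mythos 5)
Your proof is correct and follows essentially the same route as the paper: the core of both arguments is the mod-$3$ observation that a $T_1$-step never outputs a multiple of $3$ while $T_0$-steps preserve non-divisibility by $3$, combined with the bound $\alpha_{0,-1}(p)<3^{l(p)}$ from Theorem~\ref{th:struct_occ} (the paper states the modular fact backwards, as ``the predecessor set of a multiple of $3$ is $\{2^n y\}$'', whereas you propagate it forward along the tail after the last $\leftarrow$, which is the same idea). Your treatment of the $l(p)=0$ case via $\alpha_{0,-1}(p)<3^{0}=1$ is in fact slightly more direct than the paper's appeal to Theorem~\ref{th:begin}, and, exactly as in the paper, the $\Leftarrow$ direction rests on the paper's stated convention that $(\ZnZ{3^{0}})^{*}=\emptyset$, the corner case you correctly flagged.
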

\begin{proof}
We prove both directions:
$\Rightarrow$: we suppose $l(p) \neq 0$. We know $\alpha_{0,-1}(p) < 3^{l(p)}$ (Theorem~\ref{th:struct_occ}). We have to prove that $\alpha_{0,-1}(p)$ is not a multiple of three. The predecessor set of $y$, a multiple of $3$, in the Collatz graph is reduced to $\{2^n y \text{ for } n\in\N\}$. Indeed, we know that all $2^n y$ are predecessors of $y$ by the operator $T_0$. Furthermore, the operator $T^{-1}_1(y) = (2y-1)/3$ never yields to an integer if inputed a multiple of three and all $2^n y$ are. Hence no parity vector $p$ with $l(p)>0$ can have an occurrence ending in a multiple of three and we have the result. 
$\Leftarrow$: if $l(p)=0$ then $(\ZnZ{3^{l(p)}})^{*} = \emptyset$ so we have the result.
 If $l(p) = 0$ then $p$ has the form $p=\;(\downarrow)^{n}$. By Theorem~\ref{th:begin}, we deduce $\alpha_{0,0}(p) = 0$. Hence $\alpha_{0,-1}(p) = T^n(0) = 0$.
\end{proof}
We can recursively construct $\alpha_{0,-1}$ with the analogous of $T_0$ and $T_1$ in $\ZnZ{3^k}$:
\if 1\mode
\vspace{-3ex} \fi
\begin{restatable}[$T_{0,k}$ and $T_{1,k}$]{definition}{defT}\label{def:T}
The functions $T_{0,k}: \ZnZ{3^k} \to \ZnZ{3^k}$ and $T_{1,k}: \ZnZ{3^k} \to \ZnZ{3^k}$ are defined by:
$
T_{0,k}(x) = 2^{-1}_{k}x \text{ and } T_{1,k}(x) = 2^{-1}_{k}(3x+1)
$.
\end{restatable}
\begin{restatable}[Recursive structure of $\alpha_{0,-1}$]{theorem}{thend}
\label{th:end}
Let $n\in\N$. We have $\alpha_{0,-1}(\epsilon) = 0$. Then, for some $p\in\Pa_{n}$ and $k=l(p)$ we have $\alpha_{0,-1}(\composeP{p}{\downarrow}) = T_{0,k}(\alpha_{0,-1}(p))$ and $\alpha_{0,-1}(\composeP{p}{\leftarrow}) = T_{1,k+1}(\alpha_{0,-1}(p))$.
\end{restatable}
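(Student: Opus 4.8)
The plan is to read off $\alpha_{0,-1}(\composeP{p}{a})$ from the last step of the \emph{first} occurrence of $\composeP{p}{a}$, using Lemma~\ref{lem:undeux} to identify which occurrence of $p$ underlies it, and then to reduce the resulting integer identity modulo the appropriate power of $3$.

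For the base case, an occurrence of $\epsilon$ is any $1$-tuple $(o_0)$, so $\alpha_0(\epsilon)=(0)$ and $\alpha_{0,-1}(\epsilon)=0$ (equivalently, the $l(p)=0$ clause of Lemma~\ref{lem:star}). For the recursive step, fix $p\in\Pa_n$, put $k=l(p)$, and let $a\in\{\downarrow,\leftarrow\}$. Writing $m:=\alpha_{0,\norm p}(\composeP{p}{a})$, Definition~\ref{def:occ} applied to the final arrow $a$ gives $\alpha_{0,-1}(\composeP{p}{a})=T_0(m)$ when $a=\downarrow$ and $\alpha_{0,-1}(\composeP{p}{a})=T_1(m)$ when $a=\leftarrow$. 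By Lemma~\ref{lem:undeux} together with Theorem~\ref{th:struct_occ} (which gives $\alpha_{1,-1}(p)=3^{k}+\alpha_{0,-1}(p)$), $m$ equals $\alpha_{0,-1}(p)$ if $a$ is admissible for $\alpha_{0,-1}(p)$ and equals $3^{k}+\alpha_{0,-1}(p)$ otherwise; in both cases $m\equiv\alpha_{0,-1}(p)\pmod{3^{k}}$, $m<2\cdot 3^{k}$ (using $\alpha_{0,-1}(p)<3^{k}$ from Theorem~\ref{th:struct_occ}), and — this is precisely what Lemma~\ref{lem:undeux} buys — $m$ has exactly the parity demanded by $a$, so the application of $T_0$ or $T_1$ above is legitimate whether or not $a$ was admissible for $\alpha_{0,-1}(p)$.

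For $a=\downarrow$ one has $l(\composeP{p}{\downarrow})=k$ and $2\,\alpha_{0,-1}(\composeP{p}{\downarrow})=m\equiv\alpha_{0,-1}(p)\pmod{3^{k}}$, hence $\alpha_{0,-1}(\composeP{p}{\downarrow})\equiv 2^{-1}_{k}\alpha_{0,-1}(p)\pmod{3^{k}}$; since $m<2\cdot 3^{k}$ forces $\alpha_{0,-1}(\composeP{p}{\downarrow})=m/2<3^{k}$, this is the canonical representative, i.e. $T_{0,k}(\alpha_{0,-1}(p))$ by Definition~\ref{def:T}. For $a=\leftarrow$ one has $l(\composeP{p}{\leftarrow})=k+1$; writing $m=\alpha_{0,-1}(p)+3^{k}t$ with $t\in\{0,1\}$ gives $3m\equiv 3\,\alpha_{0,-1}(p)\pmod{3^{k+1}}$, so $2\,\alpha_{0,-1}(\composeP{p}{\leftarrow})=3m+1\equiv 3\,\alpha_{0,-1}(p)+1\pmod{3^{k+1}}$ and thus $\alpha_{0,-1}(\composeP{p}{\leftarrow})\equiv 2^{-1}_{k+1}(3\,\alpha_{0,-1}(p)+1)\pmod{3^{k+1}}$; the bound $m\le 2\cdot 3^{k}-1$ gives $\alpha_{0,-1}(\composeP{p}{\leftarrow})=(3m+1)/2\le 3^{k+1}-1<3^{k+1}$, again the canonical representative, i.e. $T_{1,k+1}(\alpha_{0,-1}(p))$ by Definition~\ref{def:T}.

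I expect the only subtle point to be the modulus jump in the $\leftarrow$ case: $m$ is pinned down only modulo $3^{k}$, but multiplying by $3$ promotes that ambiguity to one modulo $3^{k+1}$, which then disappears modulo $3^{k+1}$; combined with the tight bound $m<2\cdot 3^{k}$ (and the fact from Theorem~\ref{th:struct_occ} that $\alpha_{0,-1}$ of a span-$\ell$ parity vector lies in $[0,3^{\ell})$), this identifies the canonical representative uniquely. Everything else is bookkeeping, via Lemma~\ref{lem:undeux}, of which occurrence of $p$ underlies the first occurrence of $\composeP{p}{a}$.
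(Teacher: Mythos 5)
Your proof is correct, and its skeleton is the same as the paper's: both reduce the claim to Lemma~\ref{lem:undeux} (is the length-$\norm{p}$ prefix of the first occurrence of $\composeP{p}{a}$ the first or the second occurrence of $p$?) together with Theorem~\ref{th:struct_occ}'s formula $\alpha_{1,-1}(p)=3^{k}+\alpha_{0,-1}(p)$, and both note that in either case the number $m$ to which the final $T_0$ or $T_1$ is applied has the parity forced by $a$. Where you diverge is the last step: the paper splits into four sub-cases (admissible/not $\times$ $\downarrow$/$\leftarrow$) and matches the resulting value of $T(m)$ against the explicit parity-dependent closed forms of $T_{0,k}$ and $T_{1,k+1}$ proved in Appendix~\ref{app:structT} (Lemmas~\ref{lem:t0} and~\ref{lem:t1}, which in turn rest on $2^{-1}_k=(3^k+1)/2$, Lemma~\ref{lem:inv2}); you instead treat both sub-cases at once, using only $m\equiv\alpha_{0,-1}(p)\pmod{3^{k}}$, $m<2\cdot 3^{k}$ and the correct parity, and identify the result with $T_{0,k}(\alpha_{0,-1}(p))$ resp.\ $T_{1,k+1}(\alpha_{0,-1}(p))$ directly from Definition~\ref{def:T} via a congruence plus a bound pinning the canonical representative. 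Your route is more self-contained (no appendix lemmas, no explicit formula for $2^{-1}_k$) and has less casework, and your handling of the modulus jump $3^{k}\to 3^{k+1}$ in the $\leftarrow$ case is exactly the point that makes the merge legitimate; the paper's route, by contrast, leaves behind reusable explicit formulas for $T_{0,k}$ and $T_{1,k+1}$. (Incidentally, your statement also silently corrects a typo in the paper's final sub-case, which writes $T_{1,k}$ where $T_{1,k+1}$ is meant.)
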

\if 1\mode
\begin{proof}
\textbf{(Sketch)} By induction, using that $T_{0,k}(x)=(3^k+x)/2$ when $x$ is odd and $T_{1,k+1}(x) = (3^k + 3x + 1)/2$ when $x$ is even. Full proof in \cite{Collatz1arxiv}.
\end{proof}
\else
\begin{proof}
Since any $x\in\N$ is an occurrence of the parity vector $\epsilon$, we have $\alpha_{0}(\epsilon) = (0,)$ (tuple with one element). Hence $\alpha_{0,-1}(\epsilon) = \alpha_{0,\norm{\epsilon}}(\epsilon) = \alpha_{0,0}(\epsilon) = 0$. Now, let $p\in\Pa_n$ for some $n\in\N$ and $k = l(p)$. Then notice that Equations~(\ref{eq:down}) and (\ref{eq:up}) are well defined because of Theorem~\ref{th:struct_occ}. Indeed, we know that $\alpha_{0,-1}(p) < 3^{l(p)} = 3^k$ thus $\alpha_{0,-1}(p)\in\ZnZ{3^k}$ and $\alpha_{0,-1}(p)\in\ZnZ{3^{k+1}}$ and we can use the operators $T_{0,k}$ and $T_{0,k+1}$ on it. 
\\
Let's consider $\alpha_{0,-1}(\composeP{p}{a})$ with $a\in\{\downarrow,\leftarrow\}$. Two cases:
\begin{itemize}
\item The arrow $a$ is admissible for $\alpha_{0,-1}(p)$: in that case, by Lemma~\ref{lem:undeux}, we know that $(\alpha_{0,0}(\composeP{p}{a}), \alpha_{0,1}(\composeP{p}{a}), \ldots, \alpha_{0,\norm{p}}(\composeP{p}{a}))$ is the first occurrence of $p$. Hence, $\alpha_{0,-1}(p) = \alpha_{0,\norm{p}}(\composeP{p}{a})$ and we have $\alpha_{0,-1}(\composeP{p}{a}) = T(\alpha_{0,\norm{p}}(\composeP{p}{a})) = T(\alpha_{0,-1}(p)) = T_{i}(\alpha_{0,-1}(p))$. With $i=0$ if $a=\;\downarrow$ or $i=1$ if $a=\;\leftarrow$. Then two cases:
\begin{enumerate} 
  \item If $a=\;\downarrow$ then $\alpha_{0,-1}(p)$ is even and $\alpha_{0,-1}(\composeP{p}{a}) = T_0(\alpha_{0,-1}(p)) = T_{0,k}(\alpha_{0,-1}(p))$  by Appendix~\ref{app:structT}, Lemma~\ref{lem:t0}.
  \item If $a=\;\leftarrow$ then $\alpha_{0,-1}(p)$ is odd and $\alpha_{0,-1}(\composeP{p}{a}) = T_1(\alpha_{0,-1}(p)) = T_{1,k+1}(\alpha_{0,-1}(p))$  by Appendix~\ref{app:structT}, Lemma~\ref{lem:t1}.
\end{enumerate}
 \item The arrow $a$ is not admissible for $\alpha_{0,-1}(p)$: in that case, by Lemma~\ref{lem:undeux}, we know that $(\alpha_{0,0}(\composeP{p}{a}), \alpha_{0,1}(\composeP{p}{a}), \ldots, \alpha_{0,\norm{p}}(\composeP{p}{a}))$ is the second occurrence of $p$. Hence, by Theorem~\ref{th:struct_occ}, $\alpha_{0,\norm{p}}(\composeP{p}{a}) = 3^k + \alpha_{0,-1}(p)$. Now, $\alpha_{0,-1}(\composeP{p}{a}) = T(\alpha_{0,\norm{p}}(\composeP{p}{a})) = T(3^k + \alpha_{0,-1}(p))$. Then two cases:
\begin{enumerate} 
  \item If $a=\;\downarrow$ then $\alpha_{0,-1}(p)$ is odd and $\alpha_{0,-1}(\composeP{p}{a}) = T_0(3^k + \alpha_{0,-1}(p)) = \frac{3^k + \alpha_{0,-1}(p)}{2} = T_{0,k}(\alpha_{0,-1}(p))$ by Appendix~\ref{app:structT}, Lemma~\ref{lem:t0}.
  \item If $a=\;\leftarrow$ then $\alpha_{0,-1}(p)$ is even and $\alpha_{0,-1}(\composeP{p}{a}) = T_1(3^k + \alpha_{0,-1}(p)) = \frac{3^{k+1} + 3\alpha_{0,-1}(p) +1}{2} = T_{1,k}(\alpha_{0,-1}(p))$ by Appendix~\ref{app:structT}, Lemma~\ref{lem:t1}.
\end{enumerate}
\end{itemize}
In all the cases we get the result.
\end{proof}
\fi
\begin{example}
\if 0\mode On Figure~\ref{fig:order}\else On Figure~\figGene in \cite{Collatz1arxiv}\fi, we are reading $\alpha_{0,-1}(\downarrow\leftarrow\leftarrow) = 8$. On the other hand, Theorem~\ref{th:end} claims that $\alpha_{0,-1}(\downarrow\leftarrow\leftarrow) = T_{1,2}(\alpha_{0,-1}(\downarrow\leftarrow)) = T_{1,2}(2) $. Let's verify that: $T_{1,2}(2) = 2^{-1}_{2}(3*2 + 1) = 3 + 2^{-1}_{2} = 3 + \frac{3^2 + 1}{2} = 3 + 5 = 8$ as expected.
\end{example}
\subsection{The \theTree}\label{sec:theTree}

\begin{figure}[h!]
\centering
        \if 1\mode
        \begin{tikzpicture}[scale=0.53,level 1/.style={sibling distance=5cm},level 2/.style={sibling distance=7cm},
    level 3/.style={sibling distance=3.5cm}]
        \else
        \begin{tikzpicture}[scale=0.8,level 1/.style={sibling distance=5cm},level 2/.style={sibling distance=7cm},
    level 3/.style={sibling distance=3.5cm}]
        \fi
    
      \node {$(\epsilon,0,0)^{*}$}
        child {
          node {$(\leftarrow,2,1)^{\star}$}
          child {
            node {$(\leftarrow\leftarrow,8,2)$}
            child { node {$(\leftarrow\leftarrow\leftarrow,26,3)$} }
            child { node {$(\leftarrow\leftarrow\downarrow,4,2)$} }
          }
          child {node {$(\leftarrow\downarrow,1,1)$}
            child { node {$(\leftarrow\downarrow\leftarrow,2,2)$} }
            child { node {$(\leftarrow\downarrow\downarrow,2,1)^{\star}$} }
          }
        }
        child {node {$(\downarrow,0,0)^{*}$}
            child[missing]
        };
    \end{tikzpicture}
    \caption{\small First 4 levels of the \theTree. Two symmetries are highlighted by $*$ and $\star$.}\label{fig:theTree}
\end{figure}
Theorem~\ref{th:end} implies that the operators $T_{0,k}$ and $T_{1,k}$ naturally give birth to a binary tree ruling the construction of $\alpha_{0,-1}$. We call this tree the \theTree:
\begin{definition}[The \theTree]\label{def:theTree}
We call the \theTree \ the binary tree with nodes in $\Nt \subset \left ( \Pa\times\N\times\N \right )$ constructed as follow, starting from node $x=(\epsilon,0,0)$:
\begin{enumerate}
\if 1\mode
\vspace{-1ex}\fi
\item The right child of $(p,x,k)$ is $((\composeP{p}{\downarrow}),T_{0,k}(x),k)$
\item The left child of $(p,x,k)$ is given by $((\composeP{p}{\leftarrow}),T_{1,k+1}(x),k+1)$
\end{enumerate}
\end{definition}
\begin{lemma}\label{lem:nodesTree}
Nodes of the \theTree\ are: $\Nt = \{(p,\alpha_{0,-1}(p),l(p)) \text{ for } p\in\Pa\}$.
\end{lemma}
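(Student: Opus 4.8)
The plan is to prove the two inclusions separately: $\Nt \subseteq \{(p,\alpha_{0,-1}(p),l(p)) : p\in\Pa\}$ by induction on the depth of a node, and the reverse inclusion by a bijection argument exploiting that the \theTree\ is a complete infinite binary tree. All the arithmetic content is delegated to Theorem~\ref{th:end}; the only thing to manage is the bookkeeping that keeps the third coordinate $k$ equal to $l(p)$, so that the operators $T_{0,k}$ and $T_{1,k+1}$ are applied with the correct moduli.

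First I would establish the inclusion $\subseteq$ by induction on the depth $n$ of a node. For $n=0$ the only node is the root $(\epsilon,0,0)$; since $l(\epsilon)=0$ (the empty parity vector has no $\leftarrow$ arrow) and, by Theorem~\ref{th:end}, $\alpha_{0,-1}(\epsilon)=0$, the root equals $(\epsilon,\alpha_{0,-1}(\epsilon),l(\epsilon))$. For the inductive step, assume a depth-$n$ node equals $(p,\alpha_{0,-1}(p),l(p))$ with $\norm p = n$, and set $k = l(p)$. By Definition~\ref{def:theTree} its right child is $(\composeP p \downarrow, T_{0,k}(\alpha_{0,-1}(p)), k)$; by Theorem~\ref{th:end} we have $T_{0,k}(\alpha_{0,-1}(p)) = \alpha_{0,-1}(\composeP p \downarrow)$, and appending $\downarrow$ leaves the count of $\leftarrow$ arrows unchanged so $k = l(\composeP p \downarrow)$, whence the right child has the claimed form. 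Symmetrically the left child $(\composeP p \leftarrow, T_{1,k+1}(\alpha_{0,-1}(p)), k+1)$ equals $(\composeP p \leftarrow, \alpha_{0,-1}(\composeP p \leftarrow), l(\composeP p \leftarrow))$ by Theorem~\ref{th:end} together with $l(\composeP p \leftarrow) = k+1$. This closes the induction, so every node of the \theTree\ lies in $\{(p,\alpha_{0,-1}(p),l(p)) : p\in\Pa\}$.

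For the reverse inclusion I would observe that the \theTree\ is a complete infinite binary tree, so the map sending a node to the sequence of right/left turns taken on the path from the root to it is a bijection from the node set onto $\Pa = \{\downarrow,\leftarrow\}^{\ast}$, once ``right'' is read as $\downarrow$ and ``left'' as $\leftarrow$. An immediate induction on Definition~\ref{def:theTree} shows that the first coordinate of the node reached by the turn-sequence $a_0\cdots a_{n-1}$ is exactly the parity vector $a_0\cdot\ldots\cdot a_{n-1}$. Hence every $p\in\Pa$ occurs as the first coordinate of some node, which by the first inclusion must be $(p,\alpha_{0,-1}(p),l(p))$; combining the two inclusions gives the claimed equality $\Nt = \{(p,\alpha_{0,-1}(p),l(p)) : p\in\Pa\}$.

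There is no genuine obstacle here: the substantive step is Theorem~\ref{th:end}, which was already proved, and the remainder is a routine structural induction. The one place where a little care is needed is ensuring that the recursive update of the second coordinate in Definition~\ref{def:theTree} uses the operator indexed by $l(p)$ (for $\downarrow$) or $l(p)+1$ (for $\leftarrow$), matching the moduli in Theorem~\ref{th:end}; tracking the third coordinate $k = l(p)$ along the induction makes this automatic.
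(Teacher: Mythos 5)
Your proof is correct and follows essentially the same route as the paper, which simply declares the lemma immediate from Definition~\ref{def:theTree} and Theorem~\ref{th:end}; you merely spell out the routine structural induction (both inclusions) that the paper leaves implicit.
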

\begin{proof} Each node of \theTree\ corresponds to a first occurrence, immediate from Definition~\ref{def:theTree} and Theorem~\ref{th:end}.
\end{proof}
\paragraph*{Symmetries of the \theTree} Figure~\ref{fig:theTree} illustrates the first four levels of the \theTree. By construction of the \theTree, if two nodes $(p,x,k)$ and $(p',x,k)$ share the same $x$ and $k$ they will be the root of very similar sub-trees. This phenomenon is highlighted with the nodes $(\epsilon,0,0)$ and $(\downarrow,0,0)$, Figure~\ref{fig:theTree} doesn't show the sub-tree under $(\downarrow,0,0)$ as it can be entirely deduced from the sub-tree under $(\epsilon,0,0)$. The same would apply for the sub-trees under $(\leftarrow,2,1)$ and $(\leftarrow\downarrow\downarrow,2,1)$. These symmetries are closely related to the fact that keeping adding $\downarrow$ to a parity vector of span $k$ will periodically enumerate $(\ZnZ{3^k})^*$ (Lemma~\ref{lem:down}).
\section{Regular expressions defining ancestors sets}\label{sec:sym}
Pursuing our primary goal, we wish to characterize the binary expression of ancestors of an arbitrary $x$ in the Collatz graph. We decompose the set of all ancestors of $x$ as the union on $k$ of sets $\Pred{k}{x}$. The set $\Pred{k}{x}$ contains all the ancestors of $x$ which use the map $T_1$ exactly $k$ times in order to reach $x$ -- the map $T_0$ can be used an arbitrary number of times.

\begin{remark}\label{rk:justi}
The set $\Predk{x}$ appears naturally in a fast-forwarded version of the Collatz process where even steps are ignored and only odd steps are considered (see \cite{DBLP:journals/tcs/Colussi11}). In the graph of that process, $\Predk{x}$ corresponds to the set of ancestors of $x$ at distance $k$. 
\end{remark}

Let's start by noticing the following:
\begin{restatable}{lemma}{lemmult}
\label{lem:mul3}
Let $x\in\N$. If $x$ is a multiple of 3 then: $\forall k > 0, \; \Predk{x} = \emptyset$.
\end{restatable}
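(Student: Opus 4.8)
The plan is to exploit the fact that multiples of $3$ propagate backwards in the Collatz graph, and that every backward step into a multiple of $3$ is forced to be an inverse of $T_0$ (never of $T_1$). I would package this using the machinery of Section~\ref{sec:un}, since everything needed is already available. First recall that, unwinding Definitions~\ref{def:occ} and \ref{def:setocc}, a number $y$ belongs to $\Predk{x}$ exactly when there is a parity vector $p$ with span $l(p)=k$ and an occurrence $\alpha_i(p)=(o_0,\dots,o_{\norm p})$ with $o_0=y$ and $\alpha_{i,-1}(p)=o_{\norm p}=x$: the $\leftarrow$-arrows of $p$ record precisely the uses of $T_1$ along the (deterministic) forward trajectory from $y$ to $x$, and there are exactly $k$ of them. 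So it suffices to prove that no occurrence of a parity vector $p$ with $l(p)=k>0$ can end at a multiple of $3$.

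To show that, fix $p$ with $k=l(p)\ge 1$. By Lemma~\ref{lem:star}, $\alpha_{0,-1}(p)\in(\ZnZ{3^{k}})^{*}$, so in particular $3\nmid\alpha_{0,-1}(p)$. By point~\ref{point:so3} of Theorem~\ref{th:struct_occ}, every occurrence satisfies $\alpha_{i,-1}(p)=3^{k}i+\alpha_{0,-1}(p)$; since $k\ge 1$, the term $3^{k}i$ is divisible by $3$, so $\alpha_{i,-1}(p)$ is congruent to $\alpha_{0,-1}(p)$ modulo $3$ and hence is never a multiple of $3$. Therefore, if $x$ is a multiple of $3$, there is no parity vector $p$ with $l(p)=k>0$ having an occurrence ending at $x$, i.e.\ $\Predk{x}=\emptyset$ for every $k>0$.

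There is essentially no hard step; the only thing to be careful about is the bookkeeping that identifies $\Predk{x}$ with occurrences of span-$k$ parity vectors ending at $x$, which is immediate once one notes that each node of the Collatz graph has a unique successor, so the count ``exactly $k$ uses of $T_1$'' along the forward trajectory from $y$ to $x$ is well defined. (If one prefers a self-contained argument avoiding Section~\ref{sec:un}: if $T(w)$ is a multiple of $3$ then $w$ cannot be odd, since then $3\mid 2T(w)=3w+1$ would give $3\mid 1$; hence $w$ is even, $w=2T(w)$ is again a multiple of $3$, and the edge $w\to T(w)$ is a $T_0$-edge. Backward induction along any finite path terminating at a multiple of $3$ then shows every such path uses $T_1$ zero times, so it contributes only to $\text{Pred}_0(x)$.)
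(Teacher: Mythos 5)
Your proposal is correct, and your main route is packaged differently from the paper's. The paper proves the lemma directly with a two-line backward-divisibility argument: $T_1^{-1}(x)=\frac{2x-1}{3}$ is never an integer when $\mud{x}{0}{3}$, and $2^n x$ remains a multiple of $3$, so the whole ancestor set of such an $x$ collapses to $\Pred{0}{x}=\{2^n x : n\in\N\}$ --- which is exactly the self-contained argument you relegate to your final parenthesis. Your primary argument instead derives the statement as a corollary of Lemma~\ref{lem:star} together with point~\ref{point:so3} of Theorem~\ref{th:struct_occ}: for $l(p)=k\ge 1$ the first occurrence ends in $(\ZnZ{3^k})^{*}$, and the arithmetic progression $\alpha_{i,-1}=3^{k}i+\alpha_{0,-1}$ then keeps every occurrence's endpoint off the multiples of $3$. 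This is logically sound and not circular, since Lemma~\ref{lem:star} precedes Lemma~\ref{lem:mul3} and its proof does not invoke it; but note that the proof of Lemma~\ref{lem:star} already contains verbatim the same divisibility argument the paper uses here, so your main route re-derives the fact through a lemma whose proof is the paper's proof --- a repackaging rather than a genuinely new mechanism. What it buys is a clean statement that the lemma is a formal consequence of the occurrence-structure results (and your bookkeeping identifying $\Predk{x}$ with endpoints of occurrences of span-$k$ parity vectors is the right glue, justified by determinism of $T$); what the paper's direct argument buys is brevity, independence from Section~2--3 machinery, and the slightly stronger conclusion that the \emph{entire} predecessor set of a multiple of $3$ is $\{2^n x\}$.
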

\begin{proof}
More generally, if $x$ is a multiple of three, the set of ancestors of $x$ in the Collatz graph is reduced to 
$\Pred{0}{x} = \{ 2^n x \text{ for } n\in\N \}$. Indeed, $T^{-1}_1(x) = \frac{2x-1}{3}$ cannot be an integer if $\mud{x}{0}{3}$ and $\mud{x}{0}{3} \Rightarrow \forall n\in\N,\; \mud{2^n x}{0}{3}$.
\end{proof}
\begin{remark}
In fact, sets $\Pred{k}{x}$ are infinite for all $k$ as soon as $x$ is not a multiple of $3$.
\end{remark}
Thanks to Section~\ref{sec:first}, we know that we can describe the binary expression of elements of $\Pred{k}{x}$ by focusing on parity vectors: the function $\E$ will translate parity vectors to the binary expressions of ancestors. Let's make that link formal:
\begin{definition}[$\E\Predk{x}$]\label{def:epred}
Let $x\in\N$ and $k\in\N$. We define the set $\E\Predk{x} \subset \Bs$ to be:
$ \E\Predk{x} = \{  \omega \bullet \E(p) \; | \; p\in\Pa \text{ such that } \alpha_{0,-1}(p) = x\text{ \emph{mod} } 3^k \text{ and } l(p) = k  \}$. With $\omega = \eta$ if $x < 3^k$ or $\omega = \I^{-1}(i)$ otherwise, and $i=\lfloor \frac{x}{3^k} \rfloor$. By $x\text{ \emph{mod} } 3^k$, we mean ``the rest in the Euclidean division of $x$ by $3^k$''.
\end{definition}
The set $\E\Predk{x}$ constains binary representations of elements of $\Predk{x}$ -- with potential leading $0$s -- in a one-to-one correspondence:
\begin{restatable}{lemma}{lembijg}
\label{lem:bijg}
Let $x>0$ and $k\in\N$. The sets $\E\Predk{x}$ and $\Predk{x}$ are in bijection by the function $g: \E\Predk{x} \to \Predk{x}$ defined by $g(\omega) = \I(\omega)$.
\end{restatable}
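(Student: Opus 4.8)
The plan is to show that $g(\omega) = \I(\omega)$ is a well-defined function from $\E\Predk{x}$ to $\Predk{x}$, and then exhibit an explicit two-sided inverse. First I would take an arbitrary $\nu \in \E\Predk{x}$; by Definition~\ref{def:epred} it has the form $\nu = \omega \bullet \E(p)$ for some $p \in \Pa$ with $l(p) = k$ and $\alpha_{0,-1}(p) = x \bmod 3^k$, where $\omega = \eta$ if $x < 3^k$ and $\omega = \I^{-1}(\lfloor x/3^k \rfloor)$ otherwise. I would compute $\I(\nu) = \I(\omega)\cdot 2^{\norm{p}} + \I(\E(p)) = \lfloor x/3^k\rfloor \cdot 2^{\norm{p}} + \alpha_{0,0}(p)$, using that $\E(p) = \I^{-1}_{\norm{p}}(\alpha_{0,0}(p))$ has exactly $\norm{p}$ bits so prepending $\omega$ shifts by $2^{\norm{p}}$ (and in the case $x < 3^k$ this is just $\alpha_{0,0}(p)$, i.e. $\lfloor x/3^k\rfloor = 0$). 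By Theorem~\ref{th:struct_occ}, item~\ref{point:so3}, the number $i \cdot 2^{\norm{p}} + \alpha_{0,0}(p) = \alpha_{i,0}(p)$ is the starting point of the $i$-th occurrence of $p$, whose endpoint is $\alpha_{i,-1}(p) = 3^{l(p)}i + \alpha_{0,-1}(p) = 3^k i + (x \bmod 3^k)$. Taking $i = \lfloor x/3^k \rfloor$ this endpoint equals exactly $x$. Since $p$ is a parity vector with $l(p) = k$ arrows of type $\leftarrow$, the occurrence witnesses that $\I(\nu) = \alpha_{i,0}(p)$ reaches $x$ using $T_1$ exactly $k$ times, i.e. $\I(\nu) \in \Predk{x}$. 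So $g$ is well-defined.

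Next I would construct the inverse $h \colon \Predk{x} \to \E\Predk{x}$. Given $y \in \Predk{x}$, the trajectory from $y$ to $x$ determines a parity vector $p$ (record $\downarrow$ for each $T_0$ step and $\leftarrow$ for each $T_1$ step), and by hypothesis $l(p) = k$. This trajectory is an occurrence of $p$ starting at $y$, hence $y = \alpha_{i,0}(p)$ for some $i$, and then $x = \alpha_{i,-1}(p) = 3^k i + \alpha_{0,-1}(p)$ with $0 \le \alpha_{0,-1}(p) < 3^k$ by Theorem~\ref{th:struct_occ}; uniqueness of Euclidean division gives $i = \lfloor x/3^k\rfloor$ and $\alpha_{0,-1}(p) = x \bmod 3^k$. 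Therefore $p$ satisfies the two defining conditions of Definition~\ref{def:epred}, and I define $h(y) = \omega \bullet \E(p)$ with $\omega$ as in that definition; this lies in $\E\Predk{x}$. That $g \circ h = \mathrm{id}$ follows since $\I(\omega\bullet\E(p)) = i\cdot 2^{\norm p} + \alpha_{0,0}(p) = \alpha_{i,0}(p) = y$. That $h \circ g = \mathrm{id}$ follows from the fact that both the parity vector and the occurrence are uniquely determined by the trajectory, together with the injectivity of $\E$ (Lemma~\ref{lem:bij} / Definition~\ref{def:encode}): the parity vector recovered from the trajectory of $g(\nu) = \alpha_{i,0}(p)$ is $p$ itself, so $h(g(\nu)) = \omega \bullet \E(p) = \nu$, where the leading part $\omega$ is recovered identically because it is a deterministic function of $x$ and $k$ alone.

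The one subtlety to handle carefully — and the step I expect to be the main obstacle — is the bookkeeping around leading zeros and the split into the two cases $x < 3^k$ versus $x \ge 3^k$, making sure that the bit-length of $\E(p)$ is exactly $\norm{p}$ (so that concatenation with $\omega$ corresponds cleanly to multiplication by $2^{\norm{p}}$ and addition) and that $\omega = \I^{-1}(\lfloor x/3^k\rfloor)$ carries no spurious leading zero while $\I$ still recovers $\lfloor x/3^k\rfloor$ from it. This is exactly why Definition~\ref{def:inter} distinguishes $\I^{-1}_n$ from $\I^{-1}$, and why $x > 0$ is assumed (so that $\lfloor x/3^k\rfloor = 0$ forces $\omega = \eta$ rather than $\texttt{0}$, keeping the correspondence a genuine bijection onto $\Predk{x} \subseteq \Nstar$). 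Once the accounting is set up, the proof is a direct application of the arithmetic-progression structure of $\alpha(p)$ from Theorem~\ref{th:struct_occ} and the encoding identity of Definition~\ref{def:encode}.
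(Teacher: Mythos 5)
Your treatment of well-definedness and of the preimage construction is essentially the paper's argument (the paper phrases it as well-definedness, injectivity and surjectivity of $g$ rather than exhibiting $h$, but the arithmetic via Theorem~\ref{th:struct_occ} and Euclidean division is the same). The gap is in the inverse direction. For $h$ to be well defined, and for the identity $h(g(\nu))=\nu$, you need that for a given $y\in\Predk{x}$ there is a \emph{unique} parity vector $p$ with $l(p)=k$ having an occurrence that starts at $y$ and ends at $x$. Determinism of $T$ only tells you that every such $p$ is a prefix of the arrow sequence of the forward orbit of $y$; it does not by itself exclude two nested prefixes $p_1$ and $p_2=\composeP{p_1}{(\downarrow)^m}$, $m\geq 1$, both ending at $x$ and both with $k$ arrows $\leftarrow$. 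Ruling this out is precisely where the hypothesis $x>0$ acts: such a configuration would force $x=x/2^m$, impossible for $x\neq 0$ (equivalently: between two visits of the orbit to $x>0$ at least one odd step must occur, since pure halving strictly decreases, so among prefixes ending at $x$ the number of $\leftarrow$'s strictly increases and at most one prefix has exactly $k$ of them). This is exactly the content of the paper's injectivity step, which shows $\norm{p_1}=\norm{p_2}$ using $\alpha_{i,-1}(p_1)=x\neq 0$, and it cannot be waved through: for $x=0$, $k=0$, the strings $\eta,\texttt{0},\texttt{00},\dots$ all belong to $\E\Pred{0}{0}$ and all map to $0$ under $\I$, so $g$ genuinely fails to be injective there.

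Your proposal asserts this uniqueness (``both the parity vector and the occurrence are uniquely determined by the trajectory'') without proof, and your closing paragraph locates the role of $x>0$ in the leading-zero bookkeeping for $\omega$, which is not where it is needed: by Definition~\ref{def:epred}, $\omega=\eta$ whenever $x<3^k$, including $x=0$, so that bookkeeping goes through regardless. Adding the two-line no-return-by-halving argument above (or, equivalently, reproducing the paper's norm-comparison argument for injectivity) closes the gap; the rest of your proof is correct.
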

\if 1\mode
\begin{proof}
Straightforward, proof in the expanded version \cite{Collatz1arxiv}.
\end{proof}
\else
\begin{proof}
\begin{enumerate}
\item The function $g$ is well defined. Indeed, for any $\omega\in\E\Predk{x}$ \label{point:lembijg1}
\begin{align*}
\omega \in \E\Predk{x} &\Leftrightarrow \exists p \in \Pa \; \I(\omega) = 2^{\norm{p}}i + \I(\E(p)) \text{ with } \alpha_{0,-1}(p) = x \text{ mod } 3^k \text{ and } l(p) = k\\
& \Leftrightarrow \I(\omega) = 2^{\norm{p}}i + \alpha_{0,0}(p) \\
& \Leftrightarrow \I(\omega) = \alpha_{i,0}(p) = g(\omega) \in \Predk{x}
\end{align*}
With $i = \lfloor \frac{x}{3^k} \rfloor$.
\item The function $g$ is injective. Let $\omega_1,\omega_2\in\E\Predk{x}$ with $\omega_1 = \omega\bullet\E(p_1)$ and $\omega_2 = \omega\bullet\E(p_2)$ with $\omega=\eta$ if $i=\lfloor \frac{x}{3^k} \rfloor=0$ else $\omega=\I^{-1}(i)$. We have $x=\alpha_{i,-1}(p_1) = \alpha_{i,-1}(p_2)$ by hypothesis. Suppose $g(\omega_1) = g(\omega_2)$. We get $2^{\norm{p_1}}i + \I(\E(p_1)) = 2^{\norm{p_2}}i + \I(\E(p_2))$. Hence, $2^{\norm{p_1}}i + \alpha_{0,0}(p_1) = 2^{\norm{p_2}}i + \alpha_{0,0}(p_2)$. By Theorem~\ref{th:struct_occ} we get $\alpha_{i,0}(p_1) = \alpha_{i,0}(p_2)$. If $\norm{p_1} \neq \norm{p_2}$, for instance $\norm{p_1} < \norm{p_2}$ we have $p_2 = \composeP{p_1}{(\downarrow)^{\norm{p_2}-\norm{p_1}}}$. Indeed, by determinism of the Collatz process, $p_1$ must be a prefix of $p_2$ as they both are forward feasible for $y = \alpha_{i,0}(p_1) = \alpha_{i,0}(p_2)$. Furthermore, we can't add any more arrows of type $\leftarrow$ because $l(p_1)=l(p_2)$. But, $\alpha_{i,-1}(p_1) = x \neq 0$ thus $\alpha_{i,-1}(p_2) = x/(2^{\norm{p_2}-\norm{p_1}}) \neq x$ which contradicts $\alpha_{i,-1}(p_1)=\alpha_{i,-1}(p_2)$. Hence we have $\norm{p_1}=\norm{p_2}$ and thus $p_1=p_2$ because, by determinism of the Collatz process, there is only one path of a given norm between $\alpha_{i,0}(p_1)$ and $\alpha_{i,-1}(p_1)$ and thus one corresponding parity vector. Hence, $\omega_1 = \omega_2$.
\item The function $g$ is surjective. Let $y\in\Predk{x}$. There exists $p\in\Pa$ with $\alpha_{i,0} = y$ and $\alpha_{i,-1} = x$ with $i = \lfloor \frac{x}{3^k} \rfloor$. Similarly to the proof of Point~\ref{point:lembijg1}, the reader can verify that $\E(p)\in\E\Predk{x}$ is a valid antecedent of $y$ in the case $i=0$ and that $\I^{-1}(i)\bullet\E(p)\in\E\Predk{x}$ is a valid antecedent of $y$ otherwise.
\end{enumerate}
\end{proof}
\fi
Hence, in order to describe $\E\Predk{x}$ we are concerned by characterizing parity vectors $p$ such that $\alpha_{0,-1}(p) = x \text{ mod } 3^k$ and $l(p) = k$. Such $p$ correspond to the symmetries that we highlighted in the \theTree, they form an equivalence class of ``$k$-span equivalence'':
\begin{definition}[$k$-span equivalence]\label{def:simeq}
Two parity vectors $p_1,p_2\in\Pa$ are said to be $k$-span equivalent if $l(p_1)=l(p_2)=k$ and $\alpha_{0,-1}(p_1)=\alpha_{0,-1}(p_2)$. We write $p_1\simeq_k p_2$. Note that $\simeq_k$ is an equivalence relation. 
\end{definition}
The following set of binary strings will play a central role in how we can describe $k$-span equivalence classes:
\begin{definition}[Parity sequence of $(\ZnZ{3^k})^{*}$]\label{def:pik}
For $k>0$, we define $\Pi_k \in \Bs$, the parity sequence of $(\ZnZ{3^k})^{*}$ as follows:
    $\Pi_k = b_0 \dots b_{\pi_k-1}$ with $|\Pi_k| = \pi_k = |(\ZnZ{3^k})^{*}| = 2*3^{k-1}$  and, 
    $b_{\pi_k-1-i} = \textup{\texttt{0}}$ if  $2^{-i}_k$  is even and $b_{\pi_k-1-i} = \textup{\texttt{1}}$ if $2^{-i}_k$is odd.
By convention, we fix $\pi_0 = 1$.
\end{definition}
\begin{example}
For $k=3$, we have $2^{-1}_3 = 14$. The sequence of powers of $2^{-1}_3$ in $(\ZnZ{3^k})^{*}$ is: $[1, 14, 7, 17, 22, 11, 19, 23, 25, 26, 13, 20, 10, 5, 16, 8, 4, 2]$. The associated parity sequence ($\textup{\texttt{0}}$ when even and $\textup{\texttt{1}}$ when odd) is: $\textup{\texttt{101101111010010000}}$. Finally, $\Pi_3$ is the mirror image of this: $\Pi_3 = \textup{\texttt{000010010111101101}}$. We have: $
\Pi_1 = \textup{\texttt{01}}$, $\Pi_2 = \textup{\texttt{000111}}$, $\Pi_3 = \textup{\texttt{000010010111101101}}$ and\\ $\Pi_4 = \textup{\texttt{000000110010100100010110000111111001101011011101001111}}$.
\end{example}
\begin{remark}
The strings $\Pi_k$, or ``seeds'' in \cite{DBLP:journals/tcs/Colussi11}, have been studied in great depth in \cite{complexity13}.
The author find that their structure is extremely complex, that they have numerous properties and that they can be defined in a lot of different ways.
For instance, \cite{DBLP:journals/tcs/Hew16} uses the fact that strings $\Pi_k$ correspond to the repetend of $1/3^k$ in binary. 
\end{remark}
\begin{definition}[Rotation operator $\R{i}{\cdot}$]
Let $\omega \in \Bs$ with $|\omega| = n$. Then, for $0 \leq i < n$, $\R{i}{\omega}$ denotes the $i^\text{th}$ rotation (or circular shift) to the right of $\omega$. For instance, we have $\R{2}{\textup{\texttt{000111}}} = \textup{\texttt{110001}}$.
\end{definition}

\begin{figure}[t!]

\centering

\begin{tikzpicture}[scale=0.30]

\begin{scope}
    \draw [fill] (0, 0) circle [radius=0.1];
    \draw [->] [thick] [brown] (0,1) -- (0,0);
    \draw [fill] (0, 1) circle [radius=0.1];
    \draw [->] [thick] [brown](0,2) -- (0,1);
    \draw [fill] (0, 2) circle [radius=0.1];
    \draw [->] [thick] [brown](0,3) -- (0,2);
    \draw [fill] (0, 3) circle [radius=0.1];
    \draw [->] [thick] [brown](0,4) -- (0,3);
    \draw [fill] (0, 4) circle [radius=0.1];
    \draw [->] [thick] [brown](0,5) -- (0,4);
    \draw [fill] (0, 5) circle [radius=0.1];
    \draw [->] [thick] [brown](0,6) -- (0,5);
    \draw [fill] (0, 6) circle [radius=0.1];

    \draw [dashed] [->] (0,7) -- (0,6);
    \draw [dashed] [->] (0,8) -- (0,7);

    \draw [thick] [blue] [->] (1,0) -- (0,0);
    \draw [fill] (1, 0) circle [radius=0.1];
    \draw [thick] [blue] [->] (2,0) -- (1,0);
    \draw [fill] (2, 0) circle [radius=0.1];
    \draw [thick] [blue] [->] (2,1) -- (2,0);
    \draw [fill] (2, 1) circle [radius=0.1];

    \begin{scope}[yshift=6cm]
        \draw [thick] [brown] [->] (1,0) -- (0,0);
        \draw [fill] (1, 0) circle [radius=0.1];
        \draw [thick] [brown] [->] (2,0) -- (1,0);
        \draw [fill] (2, 0) circle [radius=0.1];
        \draw [thick] [brown] [->] (2,1) -- (2,0);
        \draw [fill] (2, 1) circle [radius=0.1];
    \end{scope}

    \node [below] at (0,0) {$8$};
    \node [below] at (0,-1) {$2^{-3}_2$};
\end{scope}

\begin{scope}[xshift=3cm]
    \draw [fill] (0, 0) circle [radius=0.1];
    \draw [->] (0,1) -- (0,0);
    \draw [fill] (0, 1) circle [radius=0.1];
    \draw [->] (0,2) -- (0,1);
    \draw [fill] (0, 2) circle [radius=0.1];
    \draw [->] (0,3) -- (0,2);
    \draw [fill] (0, 3) circle [radius=0.1];
    \draw [->] (0,4) -- (0,3);
    \draw [fill] (0, 4) circle [radius=0.1];
    \draw [->] (0,5) -- (0,4);
    \draw [fill] (0, 5) circle [radius=0.1];
    \draw [->] (0,6) -- (0,5);
    \draw [fill] (0, 6) circle [radius=0.1];
    \draw [dashed] [->] (0,7) -- (0,6);
    \draw [dashed] [->] (0,8) -- (0,7);
    \begin{scope}[yshift=1cm]
        \draw [thick] [->] (1,0) -- (0,0);
        \draw [fill] (1, 0) circle [radius=0.1];
        \draw [thick] [->] (2,0) -- (1,0);
        \draw [fill] (2, 0) circle [radius=0.1];
        \draw [thick] [->] (2,1) -- (2,0);
        \draw [fill] (2, 1) circle [radius=0.1];
    \end{scope}

    \begin{scope}[yshift=7cm]
        \draw [dashed] [->] (1,0) -- (0,0);
        \draw [dashed] [->] (2,0) -- (1,0);
        \draw [dashed] [->] (2,1) -- (2,0);
    \end{scope}

    \node [below] at (0,0) {$4$};
    \node [below] at (0,-1) {$2^{-4}_2$};
\end{scope}

\begin{scope}[xshift=6cm]
    \draw [fill] (0, 0) circle [radius=0.1];
    \draw [->] (0,1) -- (0,0);
    \draw [fill] (0, 1) circle [radius=0.1];
    \draw [->] (0,2) -- (0,1);
    \draw [fill] (0, 2) circle [radius=0.1];
    \draw [->] (0,3) -- (0,2);
    \draw [fill] (0, 3) circle [radius=0.1];
    \draw [->] (0,4) -- (0,3);
    \draw [fill] (0, 4) circle [radius=0.1];
    \draw [->] (0,5) -- (0,4);
    \draw [fill] (0, 5) circle [radius=0.1];
    \draw [->] (0,6) -- (0,5);
    \draw [fill] (0, 6) circle [radius=0.1];
    \draw [dashed] [->] (0,7) -- (0,6);
    \draw [dashed] [->] (0,8) -- (0,7);
    \begin{scope}[yshift=2cm]
        \draw [thick] [->] (1,0) -- (0,0);
        \draw [fill] (1, 0) circle [radius=0.1];
        \draw [thick] [->] (2,0) -- (1,0);
        \draw [fill] (2, 0) circle [radius=0.1];
        \draw [thick] [->] (2,1) -- (2,0);
        \draw [fill] (2, 1) circle [radius=0.1];
    \end{scope}

    \begin{scope}[yshift=8cm]
        \draw [dashed] [->] (1,0) -- (0,0);
        \draw [dashed] [->] (2,0) -- (1,0);
        \draw [dashed] [->] (2,1) -- (2,0);
    \end{scope}

    \node [below] at (0,0) {$2$};
    \node [below] at (0,-1) {$2^{-5}_2$};
\end{scope}

\begin{scope}[xshift=9cm]
    \draw [fill] (0, 0) circle [radius=0.1];
    \draw [->] (0,1) -- (0,0);
    \draw [fill] (0, 1) circle [radius=0.1];
    \draw [->] (0,2) -- (0,1);
    \draw [fill] (0, 2) circle [radius=0.1];
    \draw [->] (0,3) -- (0,2);
    \draw [fill] (0, 3) circle [radius=0.1];
    \draw [->] (0,4) -- (0,3);
    \draw [fill] (0, 4) circle [radius=0.1];
    \draw [->] (0,5) -- (0,4);
    \draw [fill] (0, 5) circle [radius=0.1];
    \draw [->] (0,6) -- (0,5);
    \draw [fill] (0, 6) circle [radius=0.1];
    \draw [dashed] [->] (0,7) -- (0,6);
    \draw [dashed] [->] (0,8) -- (0,7);
    \begin{scope}[yshift=3cm]
        \draw [thick] [->] (1,0) -- (0,0);
        \draw [fill] (1, 0) circle [radius=0.1];
        \draw [thick] [->] (2,0) -- (1,0);
        \draw [fill] (2, 0) circle [radius=0.1];
        \draw [thick] [->] (2,1) -- (2,0);
        \draw [fill] (2, 1) circle [radius=0.1];
    \end{scope}

    \node [below] at (0,0) {$1$};
    \node [below] at (0,-1) {$2^{-0}_2$};
\end{scope}

\begin{scope}[xshift=12cm]
    \draw [fill] (0, 0) circle [radius=0.1];
    \draw [->] (0,1) -- (0,0);
    \draw [fill] (0, 1) circle [radius=0.1];
    \draw [->] (0,2) -- (0,1);
    \draw [fill] (0, 2) circle [radius=0.1];
    \draw [->] (0,3) -- (0,2);
    \draw [fill] (0, 3) circle [radius=0.1];
    \draw [->] (0,4) -- (0,3);
    \draw [fill] (0, 4) circle [radius=0.1];
    \draw [->] (0,5) -- (0,4);
    \draw [fill] (0, 5) circle [radius=0.1];
    \draw [->] (0,6) -- (0,5);
    \draw [fill] (0, 6) circle [radius=0.1];
    \draw [dashed] [->] (0,7) -- (0,6);
    \draw [dashed] [->] (0,8) -- (0,7);
    \begin{scope}[yshift=4cm]
        \draw [thick] [->] (1,0) -- (0,0);
        \draw [fill] (1, 0) circle [radius=0.1];
        \draw [thick] [->] (2,0) -- (1,0);
        \draw [fill] (2, 0) circle [radius=0.1];
        \draw [thick] [->] (2,1) -- (2,0);
        \draw [fill] (2, 1) circle [radius=0.1];
    \end{scope}
    \node [below] at (0,0) {$5$};
    \node [below] at (0,-1) {$2^{-1}_2$};
\end{scope}

\begin{scope}[xshift=15cm]
    \draw [fill] (0, 0) circle [radius=0.1];
    \draw [->] (0,1) -- (0,0);
    \draw [fill] (0, 1) circle [radius=0.1];
    \draw [->] (0,2) -- (0,1);
    \draw [fill] (0, 2) circle [radius=0.1];
    \draw [->] (0,3) -- (0,2);
    \draw [fill] (0, 3) circle [radius=0.1];
    \draw [->] (0,4) -- (0,3);
    \draw [fill] (0, 4) circle [radius=0.1];
    \draw [->] (0,5) -- (0,4);
    \draw [fill] (0, 5) circle [radius=0.1];
    \draw [->] (0,6) -- (0,5);
    \draw [fill] (0, 6) circle [radius=0.1];
    \draw [dashed] [->] (0,7) -- (0,6);
    \draw [dashed] [->] (0,8) -- (0,7);
    \begin{scope}[yshift=5cm]
        \draw [thick] [->] (1,0) -- (0,0);
        \draw [fill] (1, 0) circle [radius=0.1];
        \draw [thick] [->] (2,0) -- (1,0);
        \draw [fill] (2, 0) circle [radius=0.1];
        \draw [thick] [->] (2,1) -- (2,0);
        \draw [fill] (2, 1) circle [radius=0.1];
    \end{scope}

    \node [below] at (0,0) {$7$};
    \node [below] at (0,-1) {$2^{-2}_2$};
\end{scope}

\end{tikzpicture}

\caption{\small Illustration of Lemma~\ref{lem:down}. How the parity vector $p=\downarrow\leftarrow\leftarrow$ (in blue), with $l(p)=2$, distributes on the elements of $(\ZnZ{3^{2}})^{*}$. The first of occurrence of $p$ is such that $\alpha_{0,-1}=8=2_{2}^{-i_0}=2_{2}^{-3}$. The parity vector $p$ is $k$-span equivalent to the parity vector $p'=\downarrow\leftarrow\leftarrow(\downarrow)^6$ (in brown).}\label{fig:poles}

\end{figure}

From any parity vector $p$, we can create an infinite family of distinct parity vectors which are $k$-span equivalent to $p$:
\begin{restatable}{lemma}{lemdown}
\label{lem:down}
Let $p\in\Pa$ and $k=l(p) > 0$. Define $p_n = \composeP{p}{(\downarrow)^{n\pi_k}}$, i.e. the parity vector $p$ followed by $n\pi_k$ arrows of type $\downarrow$,  where $\pi_k = |\Pi_k|$. Then, for all $n\in\N$ we have $p \simeq_k p_n $. Furthermore we can characterize $\alpha_{0,0}(p_n)$ through $\E(p_{n})$ with:
$$\E(p_{n+1}) = \R{i_0}{\Pi_k} \bullet \E(p_{n}) \Leftrightarrow \E(p_{n}) = (\R{i_0}{\Pi_k})^n \bullet \E(p) 
$$
With $0 \leq i_0 < \pi_k$ such that $\alpha_{0,-1}(p) = 2^{-i_0}_{k}$ in $(\ZnZ{3^k})^{*}$.
\end{restatable}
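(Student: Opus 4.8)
The plan is to prove the two assertions in turn. For the $k$-span equivalence, note first that appending arrows of type $\downarrow$ never changes the span, so $l(p_n)=l(p)=k$ for every $n$ and, by Definition~\ref{def:simeq}, it remains to check that $\alpha_{0,-1}(p_n)=\alpha_{0,-1}(p)$. Since $k>0$, Lemma~\ref{lem:star} puts $\alpha_{0,-1}(p)$ in $(\ZnZ{3^{k}})^{*}$; and because $2^{-1}_k$ is a primitive root of that group, whose order is $\pi_k=|(\ZnZ{3^{k}})^{*}|$, we may write $\alpha_{0,-1}(p)=2^{-i_0}_k=(2^{-1}_k)^{i_0}$ for a unique $0\le i_0<\pi_k$; this is the $i_0$ of the statement. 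By Theorem~\ref{th:end}, for any span-$k$ parity vector $q$ one has $\alpha_{0,-1}(\composeP{q}{\downarrow})=T_{0,k}(\alpha_{0,-1}(q))=2^{-1}_k\,\alpha_{0,-1}(q)$ in $\ZnZ{3^k}$, so an immediate induction gives $\alpha_{0,-1}(\composeP{p}{(\downarrow)^{j}})=(2^{-1}_k)^{i_0+j}$ in $(\ZnZ{3^{k}})^{*}$, the exponent read modulo $\pi_k$. Taking $j=n\pi_k$ and using $(2^{-1}_k)^{\pi_k}=1$ yields $\alpha_{0,-1}(p_n)=(2^{-1}_k)^{i_0}=\alpha_{0,-1}(p)$, hence $p\simeq_k p_n$. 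In particular every $p_n$ carries the same $i_0$.

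For the encoding formula, since $p_{n+1}=\composeP{p_n}{(\downarrow)^{\pi_k}}$ and $\E(p_0)=\E(p)$, it suffices to prove the single-step identity $\E(\composeP{q}{(\downarrow)^{\pi_k}})=\R{i_0}{\Pi_k}\bullet\E(q)$ whenever $l(q)=k$ and $\alpha_{0,-1}(q)=(2^{-1}_k)^{i_0}$; the displayed equivalence then follows by unrolling the induction. By Theorem~\ref{th:begin}, each of the $\pi_k$ successive $\downarrow$'s prepends one bit to the current encoding, this bit being $\texttt{0}$ or $\texttt{1}$ according as the current value of $\alpha_{0,-1}$ is even or odd. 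Hence the block of freshly prepended bits, read left to right, is the word $w=\beta_{\pi_k-1}\cdots\beta_1\beta_0$ where $\beta_j$ is the parity bit of $\alpha_{0,-1}(\composeP{q}{(\downarrow)^{j}})=(2^{-1}_k)^{i_0+j}$. Writing $P(e)$ for the parity bit of $(2^{-1}_k)^{e}$, a $\pi_k$-periodic function of the integer $e$, we obtain $w_m=\beta_{\pi_k-1-m}=P(i_0+\pi_k-1-m)$, whereas Definition~\ref{def:pik} says precisely that the $m$-th letter of $\Pi_k$ is $P(\pi_k-1-m)$. Therefore the $m$-th letter of $\R{i_0}{\Pi_k}$ is $P(\pi_k-1-((m-i_0)\bmod\pi_k))=P(i_0+\pi_k-1-m)=w_m$, so $w=\R{i_0}{\Pi_k}$, which is the single-step identity.

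I expect the main obstacle to be the index bookkeeping in the second paragraph: the bits are prepended in the reverse of their natural order, $\Pi_k$ is itself defined through a mirror image of a parity sequence, and $\R{i_0}{\cdot}$ shifts positions cyclically, so one must verify carefully that these three operations compose to the identity and that every exponent is reduced consistently modulo $\pi_k=|(\ZnZ{3^{k}})^{*}|$. A more routine point is to invoke correctly the facts, recalled just before Lemma~\ref{lem:star}, that $T_{0,k}$ acts as multiplication by $2^{-1}_k$ on $\ZnZ{3^k}$ and that $2^{-1}_k$ is a primitive root of $(\ZnZ{3^{k}})^{*}$, the latter guaranteeing both that $i_0$ is well defined and that $(2^{-1}_k)^{\pi_k}=1$.
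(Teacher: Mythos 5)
Your proof is correct and follows essentially the same route as the paper's: the $k$-span equivalence comes from Lemma~\ref{lem:star}, Theorem~\ref{th:end} and the fact that $\pi_k$ is the order of $(\ZnZ{3^k})^{*}$, and the encoding formula comes from applying Theorem~\ref{th:begin} across one block of $\pi_k$ arrows $\downarrow$ and matching the resulting parity word against Definition~\ref{def:pik}. The only difference is presentational: you carry out the index/rotation bookkeeping explicitly, whereas the paper states the prepended word $b_0\dots b_{\pi_k-1}$ directly and identifies it with $\R{i_0}{\Pi_k}$ by definition.
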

\if 1\mode
\begin{proof}
\textbf{(Sketch)} Direct consequence of Theorem~\ref{th:begin} and~\ref{th:end}. Full proof in the expanded version \cite{Collatz1arxiv}.
\end{proof}
\else
\begin{proof}

We have $l(p_n) = l(p) = k > 0$. By Lemma~\ref{lem:star}, we know that $\alpha_{0,-1}(p)\in(\ZnZ{3^k})^*$. Furthermore, by Theorem~\ref{th:end}, $\alpha_{0,-1}(p_n) = T^{n\pi_k}_{0,k}(\alpha_{0,-1}(p)) = (2^{-n}_k)^{\pi_k} \alpha_{0,-1}(p) =  1*\alpha_{0,-1}(p) = \alpha_{0,-1}(p)$ since $\pi_k$ is the order of the group $(\ZnZ{3^k})^*$. Hence we have $p_n \simeq_k p$. Furthermore, by Theorem~\ref{th:begin} we know that $\E(p_{n+1}) = \omega \bullet \E(p_{n})$ with $\omega=b_0\ldots b_{\pi_k - 1} \in\Bs$ with $|\omega| = \pi_k$ such that:
$$ b_{\pi_k - 1 - i } = \begin{cases} \texttt{0} &\text{ if } 2^{-i_0 - i}_k \text{ is even}\\\texttt{1} &\text{ if } 2^{-i_0 - i}_k \text{ is odd}\end{cases}$$
With $0 \leq i_0 < \pi_k$ such that $\alpha_{0,-1}(p) = 2^{-i_0}_{k}$ in $(\ZnZ{3^k})^{*}$. By definition, the string $\omega=b_0\ldots b_{\pi_k - 1}$ is exactly $\R{i_0}{\Pi_k}$ and we have the result.

\end{proof}
\fi
\begin{remark}
The result of Lemma~\ref{lem:down} is illustrated in Figure~\ref{fig:poles}. The parity vector $p$ (in blue) distributes in a ``spiral" around the elements of $(\ZnZ{3^k})^*$. When $\pi_k$ arrows of type $\downarrow$ have been added to $p$, a full ``turn" has been done and we get a path $k$-span equivalent to $p$. As a consequence, following only right children in the \theTree\ exhibits periods of length $\pi_k$ which enumerate elements of $(\ZnZ{3^k})^*$.
\end{remark}
We now have all the element in order to characterize $\E\Pred{k}{x}$ using regular expressions:
\thmain*

\begin{proof}We are going to explicitly construct $\regk{x}$, a regular expression\footnote{The regular expressions we work with are defined by the following BNF: $$\text{reg} := \emptyset  \; | \; (\omega \in \Bs) \; | \; (\text{reg}_1|\text{reg}_2) \; | \; (\text{reg})^* \; | \; (\text{reg}_1)(\text{reg}_2)$$  For instance, the expression $(\texttt{01})^*((00)|(11))$ matches any word of the form $(\texttt{01})^n \texttt{00}$ or $(\texttt{01})^n \texttt{11}$. We might omit some parenthesis when they are redundant.} which defines $\E\Predk{x}$. With the following preliminary argument we show that it is enough to construct $\regk{x}$ when $x$ is not a multiple of 3 and $x < 3^k$. In other words, when $x\in(\ZnZ{3^k})^*$.

\textbf{Preliminary Argument.} Let $x,k\in\N$. Suppose $x$ is a multiple of 3. If $k>0$, by Lemma~\ref{lem:mul3}, $\E\Predk x$ is empty and thus we can take $\reg{k}{x} = \emptyset$ in that case. If $k=0$, $\E\Pred{0}{x} = \{ \omega \bullet (\texttt{0})^n \text{ for } n\in\N \}$ with $\omega=\eta$ if $x=0$ or $\omega = \I^{-1}(x)$ otherwise (Definition~\ref{def:epred} and Theorem~\ref{th:begin}). Hence we take $\reg{0}{0}=(\texttt{0})^*$ and $\reg{0}{x}= (\I^{-1}(x))(\texttt{0})^*$ for $x>0$.

Suppose $x$ is not a multiple of 3 and $x\geq3^k$. Suppose that $\reg{k}{(x \text{ mod } 3^k)}$ exists, i.e. that the set $\E\Predk{x'}$ is regular with $x' = (x \text{ mod } 3^k)$. Then by Definition~\ref{def:epred} we can take $\reg{k}{x} = (\I^{-1}(\lfloor \frac{x}{3^k} \rfloor ))(\reg{k}{(x \text{ mod } 3^k)})$ in order to define $\E\Predk x$. Indeed, by Theorem~\ref{th:struct_occ}, $\{p\in\Pa\,|\, \alpha_{i,-1}(p) = x\} = \{p\in\Pa\,|\, \alpha_{0,-1} = (x \text{ mod } 3^k)\}$ with $i=\lfloor \frac{x}{3^k} \rfloor$.

Hence we just have to prove that $\regk{x}$ exists for all $x$, non multiple of three such that $x < 3^k$, i.e. $x\in(\ZnZ{3^k})^{*}$. We prove by induction on $k$ the following result:
$$ H(k) = `` \forall x\in(\ZnZ{3^k})^* \; \text{ there exists } \regk{x} \text{ which defines } \E\Predk x" $$ 
\textbf{Induction.}\\
\textbf{Base step $k=0$.} Trivially true because $(\ZnZ{3^k})^* = \emptyset$. Note that the following induction step will rely on knowing $\reg{0}{0}$. We have shown above that $\reg{0}{0} = (0)^{*}$.

\noindent \textbf{Inductive step.} Let $k\in\N$ such that $H(k)$ holds. We show that $H(k+1)$ holds. Let $x\in(\ZnZ{3^{k+1}})^*$ and $0 \leq i_0 < \pi_{k+1}$ such that $x = 2^{-i_0}_{k+1}$. By Definition~\ref{def:epred}, in this case, we have $\E\Pred{k+1}{x} = \{ \E(p) \; | \; p\in\Pa \text{ such that } \alpha_{0,-1}(p) = x \text{ and } l(p) = k+1  \}$. Hence, characterizing $\E\Pred{k+1}{x}$ boils down to characterizing the $(k+1)$-span equivalence class: $\{ p \; | \; p\in\Pa \text{ such that } \alpha_{0,-1}(p) = x \text{ and } l(p) = k+1  \} = \{ p \; | \; (p,x,k+1)\in\Nt \} $  (Lemma~\ref{lem:nodesTree}).

Hence, we take $p$ such that $(p,x,k+1)$ is in the \theTree\ and we analyse its structure. To do so, we consider the surrounding of $p$ in the \theTree. This will lead us to Equation~\eqref{eq:mainproot} which relates $\E(p)$ to the induction hypothesis. We are going to deploy Points 1, 2, 3, in order to show that the node $(p,x,k+1)$ can always be expressed in the context of Figure~\ref{fig:illus}:

\begin{figure}[h!]
\centering
    \begin{tikzpicture}[scale=0.48]
    \begin{scope}[xshift=-29cm]
    \node [above] at (-0.5, 0.5) {$\boldsymbol{\alpha_{0,-1}}-$\textbf{Tree}};
        \node [left] at (0, 0) {$(p_2,x_2 = 2^{-i_2}_k, k)$};
        \draw [fill] (0,0) circle [radius=0.1];
        \draw [thick] [->] (0,0) -- (-1,-2);
        \draw [fill] (-1,-2) circle [radius=0.1];
        \node [left] at (-1,-2) {$(p_1=\composeP{p_2}{\leftarrow},x_1 = T_{1,k+1}(x_2) = 2^{i_1}_{k+1}, k+1)$};
        \draw [dashed] [->] (-1, -2) -- (0.5,-3.5);
        \draw [fill] (0.5,-3.5) circle [radius=0.1];
        \node [left] at (0.5,-3.5) {$(p'=\composeP{\composeP{p_2}{\leftarrow}}{\;(\downarrow)^r}, x = 2^{i_0}_{k+1}, k+1)$};
        \draw [dashed] [->] (0.5,-3.5) -- (3,-6);
        \node [left] at (3,-6) {$(p=\composeP{p'}{(\downarrow)^{n \pi_{k+1}}}, x, k+1)$};
        \draw [fill] (3,-6) circle [radius=0.1];
    \end{scope}
    \begin{scope}[xshift=-18cm]
     \node [above] at (0, 0.4) {\textbf{Encodings}};
     \node  at (0, -0.5) {$\E(p_2)$ };
     \node  at (0, -2) {$\E(p_1) = b_{i_2} \bullet \E(p_2) $ (see Point 3)};
     \node  at (0,-3.5) {$\E(p') = \join_{i_2} \bullet \E(p_1) $ (see Point 2)};
     \node  at (0,-6) {$\E(p) = (\R{i_0}{\Pi_{k+1}})^n \bullet \E(p') $ (see Point 1)};
    \end{scope}

     




    \end{tikzpicture}
    \caption{Situation of the node $(p,x,k+1)$ is the \theTree. Each black dot is a node in the \theTree. The solid edge reaches one left child while dashed edges represent variable numbers of right children (see Definition~\ref{def:theTree}). This Figure is to be read bottom to top together with Points 1,2,3.}\label{fig:illus}
\end{figure}

In order to prove the generality of this situation, three points:
\begin{enumerate}
    \item Since $l(p) = k +1\geq 1$ we can decompose $p = p_2\;\cdot\leftarrow\cdot\;(\downarrow)^m$ with $m\in\N$ and $p_2\in\Pa$ such that $l(p_2) = k$. We can write $m = n\pi_{k+1} + r$ with $r < \pi_{k+1}$ and $p = p_2\;\cdot\leftarrow\cdot\;(\downarrow)^r\cdot(\downarrow)^{n\pi_{k+1}}$. We call the number $n$ the \textit{repeating value}. By Lemma~\ref{lem:down}, we know that $p \simeq_{k+1} \composeP{\composeP{p_2}{\leftarrow}}{\;(\downarrow)^r}$. Hence, with $p'=\composeP{\composeP{p_2}{\leftarrow}}{\;(\downarrow)^r}$, we have $\alpha_{0,-1}(p')=\alpha_{0,-1}(p)=x$ and $\E(p) = (\R{i_0}{\Pi_{k+1}})^n \bullet \E(p') $. It remains to characterize $\E(p') = \E(\composeP{\composeP{p_2}{\leftarrow}}{\;(\downarrow)^r}) = \E(\composeP{p_1}{\;(\downarrow)^r})$ with $p_1 = \composeP{p_2}{\leftarrow}$.

\item Let's consider now $x_1 = \alpha_{0,-1}(\composeP{p_2}{\leftarrow})$. By Theorem~\ref{th:end}, we have $x_1 = T_{1,k+1}(x_2) = T_{1,k+1}(2^{-i_2}_k)$. Furthermore, by the same Theorem~\ref{th:end}, we have $x = \alpha_{0,-1}(p') = \alpha_{0,-1}(\composeP{p_1}{\downarrow^r}) = T_{0,k+1}^{r}(\alpha_{0,-1}(\composeP{p_2}{\leftarrow})) = T_{0,k+1}^{r}(x_1)$. Hence $x = 2^{-r}_{k+1} x_1 $ and thus $x_1 = 2^{-i_1}_{k+1}$ with $0 \leq i_1 = -i_0 + r < \pi_{k+1}$. By Theorem~\ref{th:begin}, we deduce that: $ \E(p' = \composeP{p_1}{\;(\downarrow)^r}) = \omega \bullet \E(p_1) $.
With $\omega = j_0 \ldots j_{r-1} \in \Bs$, $|\omega| = r$, and $j_{r-i} = \begin{cases} \texttt{0} \text{ if } 2^{-i_1 - i}_{k+1} \text{ is even} \\  \texttt{1} \text{ if } 2^{-i_1 - i}_{k+1} \text{ is odd}  \end{cases}$ with $0 \leq i < r$. 

We refer to such $\omega$ by $\join_{i_2}$ because\footnote{The name $\texttt{join}$ refers to the fact that this parity sequence $\omega$ arises from ``joining'', in the \theTree, $x_1 = 2^{i_1}_{k+1}$ to $x = 2^{i_0}_{k+1}$ with $r = i_0-i_1 \geq 0$ arrows of type $\downarrow$. Notice that we can have $\join_{i_2} = \eta$ in the case where $r=0$. } it is uniquely determined by $i_2$ such that $x_1 = T_{1,k+1}(x_2) = T_{1,k+1}(2^{-i_2}_k)$. Indeed, \if 0\mode Lemma~\ref{lem:t1} \else Lemma~\lemTTT of \cite{Collatz1arxiv} \fi shows that $T_{1,k+1}$ is injective on $\ZnZ{3^k}$ hence $i_2\neq i'_2 \Rightarrow T_{1,k+1}(2^{-i_2}_k) \neq T_{1,k+1}(2^{-i'_2}_k)$. Different values of $i_2$ will yield to different $x_1$ and thus different $i_1$, $r$ and $\join_{i_2}$. 

\item Let's consider $x_2 = \alpha_{0,-1}(p_2)$. If $k\neq0$, by Lemma~\ref{lem:star}, we know that $x_2 \in (\ZnZ{3^k})^*$ and we can write $x_2 = 2_k^{-i_2}$ with $0 \leq i_2 < \pi_k$. Note that if $k=0$, by the same Lemma, we have $x_2=0=2^{0}_0$ by convention. Thus in all case we can write $x_2 = 2_k^{-i_2}$ with $0 \leq i_2 < \pi_k$. By Theorem~\ref{th:begin}, we deduce that $\E(p_1) = b_{i_2} \bullet \E(p_2)$ where $b_{i_2} = \texttt{0}$ if $\leftarrow$ is admissble for $x_2$ and $b_{i_2} = \texttt{1}$ otherwise. In other words: $b_{i_2} = \begin{cases} \texttt{0} \text{ if } 2_k^{-i_2} \texttt{ is odd}\\ \texttt{1} \text{ if } 2_k^{-i_2} \texttt{ is even}\end{cases}$.

\end{enumerate}

Over all, from Points 1, 2, 3, we deduce that:
\begin{align}\E(p) = (\R{i_0}{\Pi_{k+1}})^n \bullet  (\join_{i_2}) \bullet (b_{i_2}) \bullet \E(p_2) \label{eq:mainproot}\end{align}

We have $l(p_2) = k$. If $k\neq0$, we will be able to reduce to the induction hypothesis since $x_2\in(\ZnZ{3^k})^*$. If $k=0$ we have $x_2=0$ and we use $\reg{0}{0}$ previously constructed.

As a synthesis, notice that any value of $0 \leq i_2 < \pi_k$, any node $(p_2,x_2=2^{i_2}_k,k)$ and any repeating value $n\in\N$ will lead to the construction of a $(p,x,k+1)$ with a different $p$ for each choice of $i_2$, $p_2$ and $n$. Hence, we have completely characterized the structure of nodes of the form $(p,x,k+1)$. From the above analysis, we can deduce the recursive expression of $\reg{k+1}{x}$, we have:
\begin{align}
\begin{split}
\reg{k+1}{x} = (\R{i_0}{\Pi_{k+1}})^*( \quad &(\join_0) \; (b_0) \; (\reg{k}{2^{-0}_{k}}) \quad |\\
                                             &(\join_1) \; (b_1) \; (\reg{k}{2^{-1}_{k}}) \quad |\\
                                             &(\join_2) \; (b_2) \; (\reg{k}{2^{-2}_{k}}) \quad |\\
                                             &\quad\quad\vdots\\
                                             &(\join_{\pi_k-1}) \; (b_{\pi_k-1}) \; (\reg{k}{2^{-(\pi_k-1)}_{k}}) \quad )
\label{eq:mother}
\end{split}
\end{align}

Note the amusing fact that for $k>0$ the word $b_{\pi_k-1} b_{\pi_k-2}\ldots b_{0}$ is the binary complement of $\Pi_k$. The fact that $\reg{k}{x}$ is structured as a tree is made obvious by Equation~\eqref{eq:mother}, at each level $l \leq k$ the branching factor is $\pi_l$. The number of branches is given by $\prod_{l=0}^{k} \pi_k = 2^k 3^\frac{k(k-1)}{2}$. The number of $\{0,1\}$ symbols on each branch is bounded by $2\sum_{l=0}^k \pi_k = O(3^k)$, hence, the alphabetic width of $\reg{k}{x}$ which is the total number of $\{0,1\}$ symbols is $O(2^k 3^\frac{k(k+1)}{2})$. Finally, the star heightt\footnote{The star height metric \texttt{height} is defined by $\texttt{height}(s) = 0$ for $s\in\{\emptyset,\epsilon,0,1\}$, $\texttt{height}(s_1 | s_2) = \texttt{height}(s_1s_2) = \max(\texttt{height}(s_1), \texttt{height}(s_2))$ and $\texttt{height}(s^*) = \texttt{height}(s)+1$.}\cite{surveyRegexAuto} is $1$ as directly derived from \eqref{eq:mother}.
\end{proof}

\begin{remark}
We remark that the number of branches of $\regk{x}$ corresponds to the size of ``Level $k+1$'' that was computed in \cite{1504.03040}. The author of \cite{1504.03040} also remarks that this number corresponds to the number of different antisymmetric binary relations on a set of $k+1$ labeled points \cite{oeisA083667}.
\end{remark}

\begin{remark}
The Collatz conjecture is equivalent to: for all $x$, there is $k$ such that $x\in \Pred{k}{1}$. Which means: for all $x$, there is $k$ and $n$ such that $\reg{k}{1}$ matches $0^n \mathcal{I}^{-1}(x)$. Because the number of leading $0$s in $\Pi_k$ is equal to $\lfloor k \cdot \text{ln}(3)/\text{ln}(2) \rfloor$ (see \cite{complexity13}), we can bound the number of leading $0$s that is accepted by $\reg{k}{1}$ and so we can bound $n$ which is the number $0$s to prepend to $\mathcal{I}^{-1}(x)$. For instance $n < (k+1)^2$ is a bound that works.
\end{remark}

\if 0\mode
\begin{example}
\label{ex:final}
We implemented the construction of $\reg{k}{x}$ in a Python library named \texttt{coreli} (see Appendix~\ref{app:code}).
Let's consider $\reg{3}{14}$ which defines the binary expression of any $y$ that reaches $14$ in $3$ odd steps. Below are enumerated the 12 branches of $\reg{3}{14}$. They all start with the term $(100001001011110110)^*$ (corresponds to $\R{i_0}{\Pi_{3}}^*$ in the proof of Theorem~\ref{th:main}) which is omitted for readibility:\\
\scalebox{1}{\parbox{.5\linewidth}{%
\begin{align*}
1.\;&100(000111)^*\boldsymbol{00}(01)^*01(0)^* &7.\;& 100(000111)^*\boldsymbol{0001}(10)^*1(0)^* \\
2.\;&10000(100011)^*\boldsymbol{100}(01)^*01(0)^* &8.\;&10000(100011)^*\boldsymbol{10001}(10)^*1(0)^* \\
3.\;&100001001011110(110001)^*\boldsymbol{1100}(01)^*01(0)^*&9.\;& 100001001011110(110001)^*\boldsymbol{110001}(10)^*1(0)^*\\
4.\;&10000100101(111000)^*\boldsymbol{11100}(01)^*01(0)^* &10.\;& 10000100101(111000)^*\boldsymbol{1}(10)^*1(0)^*\\
5.\;&100001001(011100)^*\boldsymbol{011100}(01)^*01(0)^* &11.\;&100001001(011100)^*\boldsymbol{01}(10)^*1(0)^*\\
6.\;&10000100101111011(001110)^*\boldsymbol{0}(01)^*01(0)^* &12.\;& 10000100101111011(001110)^*\boldsymbol{001}(10)^*1(0)^*
\end{align*}
}}

Any number of which binary expression matches one of the 12 branch will iterate, in $3$ odd steps, to $\boldsymbol{1110}$, the binary representation of $14$. One can notice that both columns are enumerating, in the same order, all rotations of the word $(000111)$ in the left-most Kleene star. First column corresponds to rotation $(01)$ of the second Kleene star and second column corresponds to rotation $(10)$. Also, in each column, the bold substring between the first and the second Kleene star is growing a pattern, $011100$ starting on branch $6$ on the first column and $110001$ starting on branch $10$ on the second column.

Let's look at the Collatz iterations (until they reach $\boldsymbol{1110}$) of $y_1= 10000(100011)^3 100(01)01$ which is a member of branch 2 and $y_2 = (100001001011110110)10000100101(111000)^211$ which is a member of branch 10:

\begin{align*}
y_1=&\phantom{0}100001000111000111000111000101 & y_2=\,& 1000010010111101101000010010111100011\\
\phantom{0}&110001101010101010101010101000 && 1100011100011100011100011100011010101\\
10&01010100000000000000000000 & 10&010101010101010101010101010101000000
\\
\boldsymbol{111}&\boldsymbol{0}0000 &\boldsymbol{1110}&00000000000000000000000000000
\end{align*}

Because they are both in $\E \Pred{3}{14}$ both strings $y_1$ and $y_2$ reach $\boldsymbol{1110}$ by using three odd steps (three applications of the map $T_1$). We display even steps without breaking a line, the Collatz process is ignoring all ending $0$s until it finds a $1$. Ending $0s$ are generated by the ending pattern $...010101$. Inner patterns within $y_1$ and $y_2$ seem to synchronize in order to generate ending $...010101$ patterns in future iterations and finally produce $\boldsymbol{1110}$. We believe that each branch of $\reg{3}{14}$ features a different ``mechanism'' that the Collatz process has in order to synchronize inner patterns so that they produce $\boldsymbol{1110}$ after $3$ odd steps. We leave as future work to understand those mechanisms which we believe are tightly connected to how carry propagates within $y_1$, $y_2$ and their Collatz descendants. We believe that the construction developed in this paper will provide a robust exploratory tool in order to support that future research.
\end{example}
\fi

\if 1\mode
\newcommand{\para}[1]{{\vspace{1.5ex}\noindent\bf #1.}}
\newpage
\para{Acknowledgements}

Many thanks to Jose Capco, Damien Woods, Pierre-Étienne Meunier and Turlough Neary for their kind help, interest and feedback on this project. We also thank Jeffrey C. Lagarias for his surveys on the Collatz problem (\cite{survey1} and \cite{survey2}). We thank the OEIS\footnote{\url{http://oeis.org/}}, always of great help. Finally, sincere thanks to \if 1\mode the \fi anonymous reviewers. Their comments were very helpful, for instance making us realise the exponential gain of our construction compared to previous literature.
\fi

\bibliographystyle{splncs04}
\bibliography{main}

\begin{thebibliography}{10}

\bibitem{Barina2020}
David Barina.
\newblock Convergence verification of the collatz problem.
\newblock {\em The Journal of Supercomputing}, July 2020.
\newblock \url{https://doi.org/10.1007/s11227-020-03368-x}.

\bibitem{properties}
Jacek Blazewicz and Alberto Pettorossi.
\newblock Some properties of binary sequences useful for proving collatz’s
  conjecture.
\newblock {\em Foundations of Control Engineering}, 8, 01 1983.

\bibitem{nlin0502061}
M.~Bruschi.
\newblock Two cellular automata for the 3x+1 map, 2005.
\newblock \href {http://arxiv.org/abs/arXiv:nlin/0502061}
  {\path{arXiv:arXiv:nlin/0502061}}.

\bibitem{capco:hal-02062503}
Jose Capco.
\newblock {Odd Collatz Sequence and Binary Representations}.
\newblock Preprint, \url{https://hal.archives-ouvertes.fr/hal-02062503}, March
  2019.

\bibitem{cloney19873x+}
Thoma Cloney, Eric Goles, and Gerard~Y Vichniac.
\newblock The 3x+ 1 problem: a quasi cellular automaton.
\newblock {\em Complex Systems}, 1(2), 1987.

\bibitem{DBLP:journals/tcs/Colussi11}
Livio Colussi.
\newblock {The convergence classes of Collatz function}.
\newblock {\em Theor. Comput. Sci.}, 2011.
\newblock URL: \url{https://doi.org/10.1016/j.tcs.2011.05.056}, \href
  {http://dx.doi.org/10.1016/j.tcs.2011.05.056}
  {\path{doi:10.1016/j.tcs.2011.05.056}}.

\bibitem{Conway}
J.H Conway.
\newblock Unpredictable iterations.
\newblock {\em Number Theory Conference}, 1972.

\bibitem{oeisA083667}
OEIS Foundation.
\newblock The {O}nline {E}ncyclopedia of {I}nteger {S}equences.
\newblock \href{http://oeis.org/A083667}{A083667}, 2020.

\bibitem{10.2307/2153499}
Zachary Franco and Carl Pomerance.
\newblock {On a Conjecture of Crandall Concerning the $qx + 1$ Problem}.
\newblock {\em Mathematics of Computation}, 64(211):1333--1336, 1995.
\newblock URL: \url{http://www.jstor.org/stable/2153499}.

\bibitem{1504.03040}
Jeffrey~R. Goodwin.
\newblock The 3x+1 problem and integer representations, 2015.
\newblock \href {http://arxiv.org/abs/arXiv:1504.03040}
  {\path{arXiv:arXiv:1504.03040}}.

\bibitem{surveyRegexAuto}
Hermann Gruber and Markus Holzer.
\newblock From finite automata to regular expressions and back--a summary on
  descriptional complexity.
\newblock {\em Electronic Proceedings in Theoretical Computer Science}, 151, 05
  2014.
\newblock \href {http://dx.doi.org/10.4204/EPTCS.151.2}
  {\path{doi:10.4204/EPTCS.151.2}}.

\bibitem{DBLP:journals/tcs/Hew16}
Patrick~Chisan Hew.
\newblock {Working in binary protects the repetends of 1/3\({}^{\mbox{h}}\):
  Comment on Colussi's 'The convergence classes of Collatz function'}.
\newblock {\em Theor. Comput. Sci.}, 618:135--141, 2016.
\newblock URL: \url{https://doi.org/10.1016/j.tcs.2015.12.033}, \href
  {http://dx.doi.org/10.1016/j.tcs.2015.12.033}
  {\path{doi:10.1016/j.tcs.2015.12.033}}.

\bibitem{Koiran1999}
Pascal Koiran and Cristopher Moore.
\newblock {Closed-form analytic maps in one and two dimensions can simulate
  universal Turing machines}.
\newblock {\em Theoretical Computer Science}, 210(1):217--223, January 1999.
\newblock URL: \url{https://doi.org/10.1016/s0304-3975(98)00117-0}, \href
  {http://dx.doi.org/10.1016/s0304-3975(98)00117-0}
  {\path{doi:10.1016/s0304-3975(98)00117-0}}.

\bibitem{DBLP:conf/tamc/KurtzS07}
Stuart~A. Kurtz and Janos Simon.
\newblock {The Undecidability of the Generalized Collatz Problem}.
\newblock In {\em {TAMC} 2007}, pages 542--553, 2007.
\newblock URL: \url{https://doi.org/10.1007/978-3-540-72504-6\_49}, \href
  {http://dx.doi.org/10.1007/978-3-540-72504-6\_49}
  {\path{doi:10.1007/978-3-540-72504-6\_49}}.

\bibitem{10.2307/2322189}
Jeffrey~C. Lagarias.
\newblock The 3x + 1 problem and its generalizations.
\newblock {\em The American Mathematical Monthly}, 92(1):3--23, 1985.
\newblock URL: \url{http://www.jstor.org/stable/2322189}.

\bibitem{survey1}
Jeffrey~C. Lagarias.
\newblock The 3x+1 problem: An annotated bibliography (1963--1999) (sorted by
  author), 2003.
\newblock \href {http://arxiv.org/abs/arXiv:math/0309224}
  {\path{arXiv:arXiv:math/0309224}}.

\bibitem{survey2}
Jeffrey~C. Lagarias.
\newblock The 3x+1 problem: An annotated bibliography, ii (2000-2009), 2006.
\newblock \href {http://arxiv.org/abs/arXiv:math/0608208}
  {\path{arXiv:arXiv:math/0608208}}.

\bibitem{complexity13}
Josefina López and Peter Stoll.
\newblock The 2-adic, binary and decimal periods of 1/3k approach full
  complexity for increasing k.
\newblock {\em Integers [electronic only]}, 5, 01 2012.
\newblock \href {http://dx.doi.org/10.1515/integers-2012-0013}
  {\path{doi:10.1515/integers-2012-0013}}.

\bibitem{arithProg}
Kenneth Monks.
\newblock The sufficiency of arithmetic progressions for the 3x + 1 conjecture.
\newblock {\em Proceedings of the American Mathematical Society}, 134, 10 2006.
\newblock \href {http://dx.doi.org/10.2307/4098142}
  {\path{doi:10.2307/4098142}}.

\bibitem{Shallit1992The}
Jeffrey Shallit and David~A. Wilson.
\newblock The "3x + 1" problem and finite automata.
\newblock {\em Bulletin of the EATCS}, 46:182--185, 1992.

\bibitem{Collatz2RP}
Tristan St\'{e}rin and Damien Woods.
\newblock The {C}ollatz process embeds a base conversion algorithm.
\newblock In Sylvain Schmitz and Igor Potapov, editors, {\em 14th International
  Conference on Reachability Problems}, LNCS. Springer, 2020.
\newblock To appear. Full version: \url{https://arxiv.org/abs/2007.06979v2}.

\bibitem{Terras1976}
Riho Terras.
\newblock A stopping time problem on the positive integers.
\newblock {\em Acta Arithmetica}, 30(3):241--252, 1976.
\newblock URL: \url{http://eudml.org/doc/205476}.

\bibitem{wirsching1998the}
Günther~J. Wirsching.
\newblock {\em The dynamical system generated by the 3n + 1 function}.
\newblock Springer, Berlin New York, 1998.

\end{thebibliography}
\appendix

\if 0\mode
\clearpage
\section{Working with $\ZnZ{3^k}$}\label{app:structT}

In this Appendix, we recall the definition and main properties of groups $\ZnZ{3^k}$ and $(\ZnZ{3^k})^*$. Then, we proof two results on $T_{0,k}$ and $T_{1,k}$ (Definition~\ref{def:T}) which are used in the proof of Theorem~\ref{th:end}.

For $k>0$, we identify $\ZnZ{3^k}$ to $\{0,\ldots,3^k-1\} \subset \N$. Thus we implicitly have $x\in\ZnZ{3^k} \Rightarrow x < 3^k$. By $(\ZnZ{3^k})^{*}$ we refer to the multiplicative group of $\ZnZ{3^k}$, i.e. $(\ZnZ{3^k})^{*} = \{ x \in \ZnZ{3^k} \; | \; \exists y \;, \mud{xy}{1}{3^k} \}$. From elementary group theory results we can deduce that $(\ZnZ{3^k})^{*} = \{ x \; | \; x < 3^k \text{ and } x \text{ is not a multiple of } 3\}$. The element $2$ is thus always invertible in $\ZnZ{3^k}$. By $2^{-1}_{k}$ we refer to the modular inverse of $2$ in $\ZnZ{3^k}$, this means that in $\ZnZ{3^k}$ we have $2*2^{-1}_k = 1$. Furthermore, it is known that $2^{-1}_k$ is a primitive root of $(\ZnZ{3^k})^{*}$. This means that for all $x\in(\ZnZ{3^k})^{*}$ there exists $n\in\N$ such that $\mud{x}{(2^{-1}_k)^n = 2^{-n}_k}{3^k}$. Finally, even though $(\ZnZ{3^0})^{*} = \emptyset$ it will be useful, for the induction step of our main result (Theorem~\ref{th:main}), to take the convention $(2^{-1}_0)^0 = 2^0_0 = 0$.

We define $T_{0,k}$ and $T_{1,k}$ the analogous versions of $T_0$ and $T_1$ in $\ZnZ{3^k}$.

\defT

\begin{lemma}[Expression of $2^{-1}_k$]\label{lem:inv2}
We have: $ 2^{-1}_k = \frac{3^k+1}{2} $.
\end{lemma}
\begin{proof}

Let $z_k = \frac{3^k+1}{2} \in \N$. We have $z_k < 3^k$ thus $z_k\in\ZnZ{3^k}$. We also have, $2z_k = \mud{3^k + 1}{1}{3^k} $. Hence $z_k$ meets all the requirements to be $2^{-1}_k$.
\end{proof}

\begin{lemma}[Structure of $T_{0,k}$]\label{lem:t0}

Let $k\in\N$. For $x\in\ZnZ{3^k}$ we have:
\begin{equation*}
    T_{0,k}(x) = \begin{cases}
               		x/2 = T_{0}(x)  & \text{ if } x \text{ is even }\\
               		(3^k + x)/2  & \text{ if } x \text{ is odd }\\
           		\end{cases}
\end{equation*}
\end{lemma}

\begin{proof}

Let $x\in\ZnZ{3^k}$, we have $x<3^k$. Two cases:
\begin{itemize} 
	\item Case $x$ even. We have $x/2 < 3^k$ and $2*(x/2) = x$. This shows that $x/2 = 2^{-1}_k x$ and thus $T_{0,k}(x) = x/2$ for $x$ even.
	\item Case $x$ odd. We have $\mud{2^{-1}_k x}{\frac{3^k+1}{2}x}{3^k} $ by Lemma~\ref{lem:inv2}. Because $x=2y+1$, we have $2^{-1}_k x \equiv \frac{3^k+1}{2}(2y + 1) \equiv y + \frac{3^k+1}{2} \equiv \frac{3^k + 2y + 1}{2} \equiv (3^k + x)/2 \text{ mod } 3^k$. We also have $(3^k + x)/2 < 3^k$ thus, in $\ZnZ{3^k}$, $2^{-1}_k x = \frac{3^k + x}{2}$ and $T_{0,k}(x)=(3^k + x)/2$.
\end{itemize}

\end{proof}

\begin{lemma}[Structure of $T_{1,k}$]\label{lem:t1}

Let $k\in\N$. The function $T_{1,k+1}$ is $3^k$-periodic. Hence we simply have to characterize the behavior of $T_{1,k+1}$ on $\ZnZ{3^k}$. For $x\in\ZnZ{3^k}$ we have:
\begin{equation*}
    T_{1,k+1}(x) = \begin{cases}
               		(3^k + 3x + 1)/2 & \text{ if } x \text{ is even }\\
               		(3x + 1)/2 = T_{1}(x) & \text{ if } x \text{ is odd }\\
           		\end{cases}
\end{equation*}
\end{lemma}

\begin{proof}

For $x\in\ZnZ{3^k}$ we have: $T_{1,k+1}(x + 3^k) = \mud{2^{-1}_{k+1}(3^{k+1} + 3x + 1)}{2^{-1}_{k+1}(3x + 1)}{3^{k+1}} $. Thus the function $T_{1,k+1}$ is $3^k$-periodic. Now, two cases:

\begin{itemize} 
	\item Case $x$ odd. We have $(3x+1)/2 < 3^{k+1}$ and $2*((3x+1)/2) = 3x+1$. This shows that $(3x + 1)/2 = 2^{-1}_{k+1}(3x+1)$ and thus $T_{1,k+1}(x) = (3x + 1)/2$ for $x$ odd.
	\item Case $x$ even. We have $x=2y$ and $\mud{2^{-1}_{k+1}(3*2y+1)}{3y + 2^{-1}_{k+1}}{3^k}$. We have $3y + 2^{-1}_{k+1} = 3y + \frac{3^k + 1}{2} = (3^k + 3x + 1)/2$ by Lemma~\ref{lem:inv2}. Furthermore, $(3^k + 3x + 1)/2 < 3^{k+1}$ so we can conclude that $T_{1,k+1}(x) = (3^k + 3x + 1)/2$ when $x$ is even.

\end{itemize}

\end{proof}

\section{Feasible Vectors}\label{app:feasvec}

In this Section we present the formalism used in \cite{wirsching1998the} in order to prove Theorem~\ref{th:struct_occ}.

These results are based on a compact representation of parity vectors called \textit{feasible vectors} in \cite{wirsching1998the}:

\begin{definition}[Feasible vectors]\label{def:feasible}
The set of feasible vectors is $\F=\cup_{k=0}^{\infty}\N^{k+1}$. For a feasible vector $s=(s_0, \ldots, s_k)\in\F$, the \textit{length} of $s$, written $l(s)$ is $k$. The \textit{norm} of $s$ is $\norm{s} = l(s) + \sum_{i=0}^{l(s)} s_i$.
\end{definition}

\begin{example}
A feasible vector is a compact way to represent a parity vector. For instance, the parity vector $p=\;\downarrow\leftarrow\downarrow\downarrow\downarrow\leftarrow\leftarrow\downarrow\downarrow\; = \, (\downarrow)^1 \leftarrow (\downarrow)^3 \leftarrow (\downarrow)^0 \leftarrow (\downarrow)^2$. Will be represented by the feasible vector $s=(1,3,0,2)$. We have $\norm{p}=\norm{s}$ and $l(p)=l(s)$.
\end{example}

\begin{definition}[Backtracing Function]

Let $s=(s_0,\ldots,s_k)\in\F$, the backtracing function of $s$ is $v_s:\N \to \Q$ defined by:
$$ v_s(x) = T_0^{-s_0} \circ T_1^{-1} \circ T_0^{-s_1} \circ \ldots T_1^{-1} \circ T_0^{-s_k} $$

If $v_s(x)\in\N$ then we say that $s$ is backward feasible for $x$.

\end{definition}

\begin{lemma}[Lemma 2.17 in \cite{wirsching1998the}]

Let $s\in\F$ and $x\in\N$ such that $s$ is backward feasible for $x$. Then we have: $ T^{\norm{s}}(v_s(x)) = x $.

\end{lemma}

\begin{example}
For $p_3= (\downarrow)^3 \leftarrow (\downarrow)^0$, we have the corresponding feasible vector $s=(3,0)$.
\end{example}

Being a composition of affine functions, $v_s$ is affine. The author of \cite{wirsching1998the} completely characterises the structure of $v_s$:

\begin{lemma}[Lemma 2.13 in \cite{wirsching1998the}]\label{lem:heyho}

For $s=(s_0,\ldots,s_k)\in\F$ define:
\begin{align*}
c(s) = \frac{2^{\norm{s}}}{3^{l(s)}} &\text{ and }
r(s) = \sum_{j=0}^{k-1} \frac{2^{j+s_0+\dots+s_j}}{3^{j+1}}
\end{align*}

Then for any $x\in\N$ we have: $ v_s(x) = c(s)x - r(s)$.

\end{lemma}

Finally the following lemma of \cite{wirsching1998the} will essentially give the proof of Theorem~\ref{th:struct_occ}:

\begin{lemma}[Lemma 3.1 in \cite{wirsching1998the}]\label{app:lemocc}

Let $s\in\F$. Then there is exactly one $a < 3^{l(s)}$ such that for any $b\in\N$:
$$ s \text{ is backward feasible for } b\Leftrightarrow \mud{b}{a}{3^{l(s)}} $$

\end{lemma}

\begin{proof}

We know that: $ s \text{ is backward feasible for } b\Leftrightarrow v_s(b)\in\N $. Lemma~\ref{lem:heyho} gives: $v_s(b) = c(s)b - r(s) = \frac{1}{3^{l(s)}}\left ( 2^{\norm{s}}b - 3^{l(s)}r(s) \right )$. Hence, with $d=3^{l(s)}r(s)\in\N$: 
$$ s \text{ is backward feasible for } b\Leftrightarrow \mud{b}{2^{-\norm{s}}d}{3^{l(s)}}  $$

Because $2^{\norm{s}}$ is inversible in $\ZnZ{3^{l(s)}}$.

\end{proof}
Finally we can prove Theorem~\ref{th:struct_occ}:

\thstructocc*
\begin{proof}
Let $p\in\Pa$ and $s$ his associated feasible vector. By Lemma~\ref{app:lemocc}, we deduce that Point~\ref{point:so1} holds with the existence of $\alpha_{0,-1} < 3^{l(p)}$. From the same Lemma, we get $\alpha_{i,0} = 2^{\norm{p}}i + \alpha_{0,0}$. From Lemma~\ref{lem:heyho} we get: $\alpha_{i,0} = v_s(\alpha_{i,-1}) = v_s(3^{l(p)}i + \alpha_{0,-1}) = 2^{\norm{s}}i + v_s(\alpha_{0,0}) = 2^{\norm{s}}i + \alpha_{0,0}$. Finally, the bound $\alpha_{0,0} < 2^{\norm{p}}$ can also be derived from Lemma~\ref{lem:heyho}: $\alpha_{0,0} = v_s(\alpha_{0,-1}) = \frac{2^{\norm{s}}}{3^{l(s)}} \alpha_{0,-1} - r(s) < 2^{\norm{s}}$ since $r(s) \geq 0$ and $2^{\norm{s}} = 2^{\norm{p}}$.
\end{proof}

\section{Generating $\reg{k}{x}$ with \texttt{coreli}}\label{app:code}

We have implemented the construction of Theorem~\ref{th:main} in a Python library named \texttt{coreli}\footnote{Any similarity to Archangelo Corelli is purely coincidental, \url{https://www.youtube.com/watch?v=5BPhkY6xIP8}}, the Collatz Research Library: \url{https://github.com/tcosmo/coreli}. The long term goal of this library is to provide tools for exploring the Collatz process and also to implement constructions of past, current and future research on the Collatz problem. The library is fully documented here: \url{https://dna.hamilton.ie/tsterin/coreli/docs/}.

The construction of $\reg{k}{x}$ is performed in the following script: \url{https://github.com/tcosmo/coreli/blob/master/coreli/predecessors.py}. 

You'll find a detailed example (following Example~\ref{ex:final}) that will run you through the code's features in this notebook: \url{https://github.com/tcosmo/coreli/blob/master/examples/Binary\%20expression\%20of\%20ancestors\%20in\%20the\%20Collatz\%20graph.ipynb}.

\section{$\reg{4}{1}$}\label{app:long}

The following is the regular expression which defines $\E\Pred{4}{1}$, i.e. it recognises the binary representation -- with potential leading \texttt{0}s -- of any number $y$ that uses $4$ times the operator $T_1$ and any number of times the operator $T_0$ in order to reach $1$ in the Collatz process. Although we only have $k=4$, the regular expression $\reg{4}{1}$ is big: it is a tree with $11664$ branches.

$$\reg{4}{1} =$$ 

\begin{verbatim}(000000110010100100010110000111111001101011011101001111)*(((0)(0)((000010010
111101101)*(((0)(0)((000111)*(((0)(0)((01)*(((0)(1)((0)*)))))|((000)(1)((10)
*((()(1)((0)*))))))))|((000)(0)((100011)*(((10)(0)((01)*(((0)(1)((0)*)))))|(
(1000)(1)((10)*((()(1)((0)*))))))))|((0000100101111)(0)((110001)*(((110)(0)(
(01)*(((0)(1)((0)*)))))|((11000)(1)((10)*((()(1)((0)*))))))))|((000010010)(1
)((111000)*(((1110)(0)((01)*(((0)(1)((0)*)))))|(()(1)((10)*((()(1)((0)*)))))
)))|((0000100)(1)((011100)*(((01110)(0)((01)*(((0)(1)((0)*)))))|((0)(1)((10)
*((()(1)((0)*))))))))|((000010010111101)(1)((001110)*((()(0)((01)*(((0)(1)((
0)*)))))|((00)(1)((10)*((()(1)((0)*)))))))))))|((000000110010100100010)(1)((
100001001011110110)*(((10)(0)((000111)*(((0)(0)((01)*(((0)(1)((0)*)))))|((00
0)(1)((10)*((()(1)((0)*))))))))|((1000)(0)((100011)*(((10)(0)((01)*(((0)(1)(
(0)*)))))|((1000)(1)((10)*((()(1)((0)*))))))))|((10000100101111)(0)((110001)
*(((110)(0)((01)*(((0)(1)((0)*)))))|((11000)(1)((10)*((()(1)((0)*))))))))|((
1000010010)(1)((111000)*(((1110)(0)((01)*(((0)(1)((0)*)))))|(()(1)((10)*((()
(1)((0)*))))))))|((10000100)(1)((011100)*(((01110)(0)((01)*(((0)(1)((0)*))))
)|((0)(1)((10)*((()(1)((0)*))))))))|((1000010010111101)(1)((001110)*((()(0)(
(01)*(((0)(1)((0)*)))))|((00)(1)((10)*((()(1)((0)*)))))))))))|((000000110010
1)(0)((010000100101111011)*(((010)(0)((000111)*(((0)(0)((01)*(((0)(1)((0)*))
)))|((000)(1)((10)*((()(1)((0)*))))))))|((01000)(0)((100011)*(((10)(0)((01)*
(((0)(1)((0)*)))))|((1000)(1)((10)*((()(1)((0)*))))))))|((010000100101111)(0
)((110001)*(((110)(0)((01)*(((0)(1)((0)*)))))|((11000)(1)((10)*((()(1)((0)*)
)))))))|((01000010010)(1)((111000)*(((1110)(0)((01)*(((0)(1)((0)*)))))|(()(1
)((10)*((()(1)((0)*))))))))|((010000100)(1)((011100)*(((01110)(0)((01)*(((0)
(1)((0)*)))))|((0)(1)((10)*((()(1)((0)*))))))))|((01000010010111101)(1)((001
110)*((()(0)((01)*(((0)(1)((0)*)))))|((00)(1)((10)*((()(1)((0)*)))))))))))|(
(000000110)(0)((101000010010111101)*(((1010)(0)((000111)*(((0)(0)((01)*(((0)
(1)((0)*)))))|((000)(1)((10)*((()(1)((0)*))))))))|((101000)(0)((100011)*(((1
0)(0)((01)*(((0)(1)((0)*)))))|((1000)(1)((10)*((()(1)((0)*))))))))|((1010000
100101111)(0)((110001)*(((110)(0)((01)*(((0)(1)((0)*)))))|((11000)(1)((10)*(
(()(1)((0)*))))))))|((101000010010)(1)((111000)*(((1110)(0)((01)*(((0)(1)((0
)*)))))|(()(1)((10)*((()(1)((0)*))))))))|((1010000100)(1)((011100)*(((01110)
(0)((01)*(((0)(1)((0)*)))))|((0)(1)((10)*((()(1)((0)*))))))))|(()(1)((001110
)*((()(0)((01)*(((0)(1)((0)*)))))|((00)(1)((10)*((()(1)((0)*)))))))))))|((00
00001100101001000101100001111110011010110)(1)((110100001001011110)*(((11010)
(0)((000111)*(((0)(0)((01)*(((0)(1)((0)*)))))|((000)(1)((10)*((()(1)((0)*)))
)))))|((1101000)(0)((100011)*(((10)(0)((01)*(((0)(1)((0)*)))))|((1000)(1)((1
0)*((()(1)((0)*))))))))|((11010000100101111)(0)((110001)*(((110)(0)((01)*(((
0)(1)((0)*)))))|((11000)(1)((10)*((()(1)((0)*))))))))|((1101000010010)(1)((1
11000)*(((1110)(0)((01)*(((0)(1)((0)*)))))|(()(1)((10)*((()(1)((0)*))))))))|
((11010000100)(1)((011100)*(((01110)(0)((01)*(((0)(1)((0)*)))))|((0)(1)((10)
*((()(1)((0)*))))))))|((1)(1)((001110)*((()(0)((01)*(((0)(1)((0)*)))))|((00)
(1)((10)*((()(1)((0)*)))))))))))|((000000110010100100010110000111111)(0)((01
1010000100101111)*(((011010)(0)((000111)*(((0)(0)((01)*(((0)(1)((0)*)))))|((
000)(1)((10)*((()(1)((0)*))))))))|((01101000)(0)((100011)*(((10)(0)((01)*(((
0)(1)((0)*)))))|((1000)(1)((10)*((()(1)((0)*))))))))|(()(0)((110001)*(((110)
(0)((01)*(((0)(1)((0)*)))))|((11000)(1)((10)*((()(1)((0)*))))))))|((01101000
010010)(1)((111000)*(((1110)(0)((01)*(((0)(1)((0)*)))))|(()(1)((10)*((()(1)(
(0)*))))))))|((011010000100)(1)((011100)*(((01110)(0)((01)*(((0)(1)((0)*))))
)|((0)(1)((10)*((()(1)((0)*))))))))|((01)(1)((001110)*((()(0)((01)*(((0)(1)(
(0)*)))))|((00)(1)((10)*((()(1)((0)*)))))))))))|((00000011001010010001011000
01111110011)(0)((101101000010010111)*(((1011010)(0)((000111)*(((0)(0)((01)*(
((0)(1)((0)*)))))|((000)(1)((10)*((()(1)((0)*))))))))|((101101000)(0)((10001
1)*(((10)(0)((01)*(((0)(1)((0)*)))))|((1000)(1)((10)*((()(1)((0)*))))))))|((
1)(0)((110001)*(((110)(0)((01)*(((0)(1)((0)*)))))|((11000)(1)((10)*((()(1)((
0)*))))))))|((101101000010010)(1)((111000)*(((1110)(0)((01)*(((0)(1)((0)*)))
))|(()(1)((10)*((()(1)((0)*))))))))|((1011010000100)(1)((011100)*(((01110)(0
)((01)*(((0)(1)((0)*)))))|((0)(1)((10)*((()(1)((0)*))))))))|((101)(1)((00111
0)*((()(0)((01)*(((0)(1)((0)*)))))|((00)(1)((10)*((()(1)((0)*)))))))))))|((0
00000110010100100010110000111111001101)(0)((110110100001001011)*(((11011010)
(0)((000111)*(((0)(0)((01)*(((0)(1)((0)*)))))|((000)(1)((10)*((()(1)((0)*)))
)))))|((1101101000)(0)((100011)*(((10)(0)((01)*(((0)(1)((0)*)))))|((1000)(1)
((10)*((()(1)((0)*))))))))|((11)(0)((110001)*(((110)(0)((01)*(((0)(1)((0)*))
)))|((11000)(1)((10)*((()(1)((0)*))))))))|((1101101000010010)(1)((111000)*((
(1110)(0)((01)*(((0)(1)((0)*)))))|(()(1)((10)*((()(1)((0)*))))))))|((1101101
0000100)(1)((011100)*(((01110)(0)((01)*(((0)(1)((0)*)))))|((0)(1)((10)*((()(
1)((0)*))))))))|((1101)(1)((001110)*((()(0)((01)*(((0)(1)((0)*)))))|((00)(1)
((10)*((()(1)((0)*)))))))))))|((00000011001010010001011000011111100110101101
11010)(0)((111011010000100101)*(((111011010)(0)((000111)*(((0)(0)((01)*(((0)
(1)((0)*)))))|((000)(1)((10)*((()(1)((0)*))))))))|((11101101000)(0)((100011)
*(((10)(0)((01)*(((0)(1)((0)*)))))|((1000)(1)((10)*((()(1)((0)*))))))))|((11
1)(0)((110001)*(((110)(0)((01)*(((0)(1)((0)*)))))|((11000)(1)((10)*((()(1)((
0)*))))))))|((11101101000010010)(1)((111000)*(((1110)(0)((01)*(((0)(1)((0)*)
))))|(()(1)((10)*((()(1)((0)*))))))))|((111011010000100)(1)((011100)*(((0111
0)(0)((01)*(((0)(1)((0)*)))))|((0)(1)((10)*((()(1)((0)*))))))))|((11101)(1)(
(001110)*((()(0)((01)*(((0)(1)((0)*)))))|((00)(1)((10)*((()(1)((0)*)))))))))
))|((000000110010100100010110000)(1)((111101101000010010)*(((1111011010)(0)(
(000111)*(((0)(0)((01)*(((0)(1)((0)*)))))|((000)(1)((10)*((()(1)((0)*)))))))
)|((111101101000)(0)((100011)*(((10)(0)((01)*(((0)(1)((0)*)))))|((1000)(1)((
10)*((()(1)((0)*))))))))|((1111)(0)((110001)*(((110)(0)((01)*(((0)(1)((0)*))
)))|((11000)(1)((10)*((()(1)((0)*))))))))|(()(1)((111000)*(((1110)(0)((01)*(
((0)(1)((0)*)))))|(()(1)((10)*((()(1)((0)*))))))))|((1111011010000100)(1)((0
11100)*(((01110)(0)((01)*(((0)(1)((0)*)))))|((0)(1)((10)*((()(1)((0)*)))))))
)|((111101)(1)((001110)*((()(0)((01)*(((0)(1)((0)*)))))|((00)(1)((10)*((()(1
)((0)*)))))))))))|((0000001100101001000101100)(0)((011110110100001001)*(((01
111011010)(0)((000111)*(((0)(0)((01)*(((0)(1)((0)*)))))|((000)(1)((10)*((()(
1)((0)*))))))))|((0111101101000)(0)((100011)*(((10)(0)((01)*(((0)(1)((0)*)))
))|((1000)(1)((10)*((()(1)((0)*))))))))|((01111)(0)((110001)*(((110)(0)((01)
*(((0)(1)((0)*)))))|((11000)(1)((10)*((()(1)((0)*))))))))|((0)(1)((111000)*(
((1110)(0)((01)*(((0)(1)((0)*)))))|(()(1)((10)*((()(1)((0)*))))))))|((011110
11010000100)(1)((011100)*(((01110)(0)((01)*(((0)(1)((0)*)))))|((0)(1)((10)*(
(()(1)((0)*))))))))|((0111101)(1)((001110)*((()(0)((01)*(((0)(1)((0)*)))))|(
(00)(1)((10)*((()(1)((0)*)))))))))))|((0000001100101001000101100001111110011
01011011101001)(1)((101111011010000100)*(((101111011010)(0)((000111)*(((0)(0
)((01)*(((0)(1)((0)*)))))|((000)(1)((10)*((()(1)((0)*))))))))|((101111011010
00)(0)((100011)*(((10)(0)((01)*(((0)(1)((0)*)))))|((1000)(1)((10)*((()(1)((0
)*))))))))|((101111)(0)((110001)*(((110)(0)((01)*(((0)(1)((0)*)))))|((11000)
(1)((10)*((()(1)((0)*))))))))|((10)(1)((111000)*(((1110)(0)((01)*(((0)(1)((0
)*)))))|(()(1)((10)*((()(1)((0)*))))))))|(()(1)((011100)*(((01110)(0)((01)*(
((0)(1)((0)*)))))|((0)(1)((10)*((()(1)((0)*))))))))|((10111101)(1)((001110)*
((()(0)((01)*(((0)(1)((0)*)))))|((00)(1)((10)*((()(1)((0)*)))))))))))|((0000
001100101001000)(1)((010111101101000010)*(((0101111011010)(0)((000111)*(((0)
(0)((01)*(((0)(1)((0)*)))))|((000)(1)((10)*((()(1)((0)*))))))))|((0101111011
01000)(0)((100011)*(((10)(0)((01)*(((0)(1)((0)*)))))|((1000)(1)((10)*((()(1)
((0)*))))))))|((0101111)(0)((110001)*(((110)(0)((01)*(((0)(1)((0)*)))))|((11
000)(1)((10)*((()(1)((0)*))))))))|((010)(1)((111000)*(((1110)(0)((01)*(((0)(
1)((0)*)))))|(()(1)((10)*((()(1)((0)*))))))))|((0)(1)((011100)*(((01110)(0)(
(01)*(((0)(1)((0)*)))))|((0)(1)((10)*((()(1)((0)*))))))))|((010111101)(1)((0
01110)*((()(0)((01)*(((0)(1)((0)*)))))|((00)(1)((10)*((()(1)((0)*)))))))))))
|((000)(0)((001011110110100001)*(((00101111011010)(0)((000111)*(((0)(0)((01)
*(((0)(1)((0)*)))))|((000)(1)((10)*((()(1)((0)*))))))))|((0010111101101000)(
0)((100011)*(((10)(0)((01)*(((0)(1)((0)*)))))|((1000)(1)((10)*((()(1)((0)*))
))))))|((00101111)(0)((110001)*(((110)(0)((01)*(((0)(1)((0)*)))))|((11000)(1
)((10)*((()(1)((0)*))))))))|((0010)(1)((111000)*(((1110)(0)((01)*(((0)(1)((0
)*)))))|(()(1)((10)*((()(1)((0)*))))))))|((00)(1)((011100)*(((01110)(0)((01)
*(((0)(1)((0)*)))))|((0)(1)((10)*((()(1)((0)*))))))))|((0010111101)(1)((0011
10)*((()(0)((01)*(((0)(1)((0)*)))))|((00)(1)((10)*((()(1)((0)*)))))))))))|((
0000001100101001000101100001111)(1)((100101111011010000)*(((100101111011010)
(0)((000111)*(((0)(0)((01)*(((0)(1)((0)*)))))|((000)(1)((10)*((()(1)((0)*)))
)))))|((10010111101101000)(0)((100011)*(((10)(0)((01)*(((0)(1)((0)*)))))|((1
000)(1)((10)*((()(1)((0)*))))))))|((100101111)(0)((110001)*(((110)(0)((01)*(
((0)(1)((0)*)))))|((11000)(1)((10)*((()(1)((0)*))))))))|((10010)(1)((111000)
*(((1110)(0)((01)*(((0)(1)((0)*)))))|(()(1)((10)*((()(1)((0)*))))))))|((100)
(1)((011100)*(((01110)(0)((01)*(((0)(1)((0)*)))))|((0)(1)((10)*((()(1)((0)*)
)))))))|((10010111101)(1)((001110)*((()(0)((01)*(((0)(1)((0)*)))))|((00)(1)(
(10)*((()(1)((0)*)))))))))))|((000000110010100100010110000111111001101011011
)(1)((010010111101101000)*(((0100101111011010)(0)((000111)*(((0)(0)((01)*(((
0)(1)((0)*)))))|((000)(1)((10)*((()(1)((0)*))))))))|(()(0)((100011)*(((10)(0
)((01)*(((0)(1)((0)*)))))|((1000)(1)((10)*((()(1)((0)*))))))))|((0100101111)
(0)((110001)*(((110)(0)((01)*(((0)(1)((0)*)))))|((11000)(1)((10)*((()(1)((0)
*))))))))|((010010)(1)((111000)*(((1110)(0)((01)*(((0)(1)((0)*)))))|(()(1)((
10)*((()(1)((0)*))))))))|((0100)(1)((011100)*(((01110)(0)((01)*(((0)(1)((0)*
)))))|((0)(1)((10)*((()(1)((0)*))))))))|((010010111101)(1)((001110)*((()(0)(
(01)*(((0)(1)((0)*)))))|((00)(1)((10)*((()(1)((0)*)))))))))))|((0000001)(1)(
(001001011110110100)*(((00100101111011010)(0)((000111)*(((0)(0)((01)*(((0)(1
)((0)*)))))|((000)(1)((10)*((()(1)((0)*))))))))|((0)(0)((100011)*(((10)(0)((
01)*(((0)(1)((0)*)))))|((1000)(1)((10)*((()(1)((0)*))))))))|((00100101111)(0
)((110001)*(((110)(0)((01)*(((0)(1)((0)*)))))|((11000)(1)((10)*((()(1)((0)*)
)))))))|((0010010)(1)((111000)*(((1110)(0)((01)*(((0)(1)((0)*)))))|(()(1)((1
0)*((()(1)((0)*))))))))|((00100)(1)((011100)*(((01110)(0)((01)*(((0)(1)((0)*
)))))|((0)(1)((10)*((()(1)((0)*))))))))|((0010010111101)(1)((001110)*((()(0)
((01)*(((0)(1)((0)*)))))|((00)(1)((10)*((()(1)((0)*)))))))))))|((00000011001
0100)(1)((000100101111011010)*((()(0)((000111)*(((0)(0)((01)*(((0)(1)((0)*))
)))|((000)(1)((10)*((()(1)((0)*))))))))|((00)(0)((100011)*(((10)(0)((01)*(((
0)(1)((0)*)))))|((1000)(1)((10)*((()(1)((0)*))))))))|((000100101111)(0)((110
001)*(((110)(0)((01)*(((0)(1)((0)*)))))|((11000)(1)((10)*((()(1)((0)*)))))))
)|((00010010)(1)((111000)*(((1110)(0)((01)*(((0)(1)((0)*)))))|(()(1)((10)*((
()(1)((0)*))))))))|((000100)(1)((011100)*(((01110)(0)((01)*(((0)(1)((0)*))))
)|((0)(1)((10)*((()(1)((0)*))))))))|((00010010111101)(1)((001110)*((()(0)((0
1)*(((0)(1)((0)*)))))|((00)(1)((10)*((()(1)((0)*))))))))))))
\end{verbatim}

\fi

\end{document}